\newcolumntype{Y}{>{\centering\arraybackslash}X}
\newcommand{\N}{\mathbb{N}}
\def\ta{\mathtt{a}}
\def\tb{\mathtt{b}}
\def\tc{\mathtt{c}}
\def\td{\mathtt{d}}
\def\r{\operatorname{r}}
\def\m{\operatorname{m}}
\renewcommand{\epsilon}{\varepsilon}
\renewcommand{\phi}{\varphi}
\DeclareMathOperator{\arch}{ar}
\DeclareMathOperator{\ar}{ar}
\DeclareMathOperator{\alphabet}{alph}
\DeclareMathOperator{\al}{alph}
\DeclareMathOperator\ScatFact{SubSeq}
\DeclareMathOperator\SubWords{SubSeq}
\DeclareMathOperator{\letters}{alph}
\DeclareMathOperator{\PerfUniv}{PUniv}
\DeclareMathOperator{\MinPerfUniv}{MinPUniv}
\DeclareMathOperator{\Univ}{Univ}
\DeclareMathOperator{\inner}{in}
\DeclareMathOperator{\nextAlphPos}{next}
\DeclareMathOperator{\splitt}{split}
\DeclareMathOperator{\prevArray}{samePrev}
\DeclareMathOperator{\nextpos}{nextA}
\DeclareMathOperator{\jumpAhead}{{jumpFwd}}
\DeclareMathOperator{\jumpBack}{{jumpBack}}
\newcommand{\harpvecsign}{\scriptscriptstyle\leftharpoonup}
\newcommand{\harpoonvec}[2]{%
	\ifx\displaystyle#1\doalign{$\harpvecsign$}{#1#2}\fi
	\ifx\textstyle#1\doalign{$\harpvecsign$}{#1#2}\fi
	\ifx\scriptstyle#1\doalign{\scalebox{.6}[.9]{$\harpvecsign$}}{#1#2}\fi
	\ifx\scriptscriptstyle#1\doalign{\scalebox{.5}[.8]{$\harpvecsign$}}{#1#2}\fi
}
\newcommand{\doalign}[2]{%
	{\vbox{\offinterlineskip\ialign{\hfil##\hfil\cr#1\cr$#2$\cr}}}%
}
\def\nth#1{#1$^{\text{th}}$}
\newif\ifpaper
\newtheorem{observation}{Observation}
\begin{document}
	\title{Tight Bounds for the Number of Absent Subsequences}
	%
	%
	\author{ Duncan Adamson\inst{1} \and
		Pamela Fleischmann\inst{2} \and
		Annika Huch\inst{2} \and
		Florin Manea\inst{3} \and
		Paul Sarnighausen-Cahn\inst{3} \and
		Max Wiedenhöft\inst{2}
		}
	\authorrunning{D. Adamson et al.}
	%
	\institute{University of St Andrews, UK\\
		 \email{duncan.adamson@st-andrews.ac.uk}\and
		Department of Computer Science, Kiel University, Germany\\
		\email{\{fpa,ahu,maw\}@informatik.uni-kiel.de} \and
		Department of Computer Science, University of Göttingen, Germany\\
		\email{\{florin.manea,paul.cahn\}@cs.uni-goettingen.de}}
	\maketitle              

\begin{abstract}
A {\em subsequence} of a word $w$ is a word $u$ that can be obtained by deleting some letters from $w$ while maintaining the relative order of the remaining letters, e.g., $\mathtt{lala}$ is a subsequence of $\mathtt{alfalfa}$. A word, over some alphabet $\Sigma$, which has all possible words of length $\iota$ over $\Sigma$ as subsequences is called $\iota$-universal, and the largest $\iota$ for which this holds is called the universality index of $w$, and denoted $\iota(w)$. Moreover, words that are not subsequences of $w$ are called absent subsequences (AS) of $w$, and their investigation was started in (Kosche et al., 2022). In this paper, we present tight bounds 
on the number of AS of a given length $k$ among all words with the same universality index $\iota$. For both the lower and upper bound, we construct words that have, respectively, a minimal and maximal number of absent subsequences of the respective length $k$, and, in the case of the lower bound, we provide the exact number of missing subsequences as a closed form. Finally, we present efficient enumeration algorithms for the set of subsequences of given length of a word: we give a novel, optimal enumeration algorithm with output linear delay of this set of subsequences, with preprocessing time $O(|w|)$, which is further improved to an incremental enumeration algorithm with $O(1)$ delay of this set of subsequences, with preprocessing time $O(|w|)$.\looseness=-1
\end{abstract}


	%
	%
	%
	\section{Introduction}
	For a given word $w$, a \emph{subsequence} (\emph{scattered factor}) of $w$ is a 
word obtained by deleting letters from $w$, i.e.,  formally a word $u$ such that there exist  
indices $1 \leq i_1 < i_2 < \dots < i_{\vert u \vert} \leq \vert w \vert$ with $u = w[i_1] w[i_2] 
\cdots w[i_{\vert u \vert}]$ and $\vert u \vert$ (resp.~$\vert w \vert$) denotes the length of $u$ 
(resp.~$w$). For instance, $\mathtt{mega}, \mathtt{meat}$, and $\mathtt{gate}$ are subsequences 
of $w=\mathtt{pomegranate}$ while $\mathtt{goat}$ is not, because 
its letters do not occur in the correct order in $w$.
The relation between words and their subsequences has been studied in
logics \cite{DBLP:conf/lics/HalfonSZ17,10.1007/978-3-030-17127-8_20}, as well as language and automata theory \cite{DBLP:conf/isaac/AdamsonFHKMN23,DBLP:journals/ipl/KarandikarKS15,DBLP:journals/lmcs/KarandikarS19,simon1972hierarchies,DBLP:conf/automata/Simon75,zetzsche:LIPIcs.ICALP.2016.123}.
The set of subsequences of a word reveals information about existing and absent parts of the word, and thus are not only of theoretical but of practical interest, too. When examining discrete data, subsequences are often used to model corrupted data \cite{dress2005reconstructing,DBLP:journals/ijfcs/FleischmannLMNR21,DBLP:conf/dlt/Manuch99} related to the reconstruction problem. These problems appear in a wide variety of fields, such as 
formal software verification \cite{DBLP:conf/lics/HalfonSZ17,zetzsche:LIPIcs.ICALP.2016.123}. Subsequences have also been studied in the context of database theory \cite{artikis2017complex,Kleest-Meissner22,Kleest-Meissner23,SchmidSIGMOD} or with motivation coming from this field \cite{DayKMS22,Goettingen2023words,ManeaRS24}. 

In terms of theoretical work, the study of subsequences is strongly related to the 1972-work of Simon \cite{simon1972hierarchies}, where the famous congruence relation, now known as \emph{Simon's congruence}, regarding piecewise-testable languages is introduced. Two words $u$ and $v$ are called \emph{Simon $k$-congruent} w.r.t.~a natural number $k$ if they have the same set of subsequences of length at most $k$. For example, $\mathtt{aaba}$ and $\mathtt{abaa}$ are Simon $2$-congruent but not $3$-congruent.
For a more detailed overview of Simon's congruence, we refer to \cite[Chapter 6]{lothaire} by Sakarovich and Simon, as well as the surveys \cite{pin2019influence,Kosche2022SubsequenceSurvey}. Despite intense scrutiny~\cite{DBLP:conf/mfcs/FleischerK18,DBLP:conf/stacs/GawrychowskiKKM21} and several attempts toward giving the number of congruence classes,  this problem remains open and continues to attract interest \cite{DBLP:conf/fct/FleischmannHHN23,DBLP:conf/rp/FleischmannKKMNSW23,DBLP:journals/tcs/KimHKS23,DBLP:conf/dlt/KimHKS23,DBLP:journals/tcs/KimKH24}. Current developments inspect both the shortest absent subsequence of words \cite{DBLP:journals/fuin/KoscheKMS22} and relations to universality index on the other hand \cite{DBLP:conf/dlt/BarkerFHMN20,DBLP:conf/stacs/DayFKKMS21}. This notion of universality serves as a measure of the containment of all subsequences of up to a given length and originates from two sources: Karandikar et al. \cite{DBLP:journals/ipl/KarandikarKS15} introduced the notion of $k$-richness in the context of piecewise testable languages \cite{katai2012word,pach2018normal} 
which coincides with the notion of $k$-universality. The language of $k$-universal words is extensively studied and characterised in \cite{DBLP:conf/isaac/AdamsonFHKMN23,DBLP:conf/dlt/BarkerFHMN20,DBLP:conf/stacs/DayFKKMS21,DBLP:journals/tcs/Hebrard91}. 
One of the main insights concerning subsequence-universality is that the $k$-universality of words is tightly related to the {\em arch factorisation} by Hébrard   \cite{DBLP:journals/tcs/Hebrard91}.
One approach followed in \cite{DBLP:journals/tcs/FleischmannHHHMN23,DBLP:journals/fuin/KoscheKMS22} is to characterise absent subsequences of words in order to determine the index of Simon's congruence, i.e., parametrise the problem by the number of absent subsequences. 
Additionally, absent subsequences seem to naturally occur in rather practical scenarios such as in the context of reachability and avoidability problems~\cite{DBLP:journals/fuin/KoscheKMS22}.
In \cite{DBLP:journals/fuin/KoscheKMS22}, the authors gave an implicit characterisation and a tree-based representation of all shortest absent subsequences of a given word.\looseness=-1

\medskip

\noindent
\textbf{Own Contribution.}
We extend the ideas of \cite{DBLP:journals/tcs/FleischmannHHHMN23,DBLP:conf/fct/FleischmannHHN23,DBLP:journals/fuin/KoscheKMS22} to the analysis of absent subsequences, that are longer than the shortest ones. A first generalisation is to ask for the minimal and maximal number of absent subsequences of a word in relation to the number of arches $\iota$ and the size of the alphabet. Thus, we ask - given an alphabet  $\Sigma$, and positive integers $\iota$ and $k$ where $\iota < k$ - what is the structure of a word containing exactly $\iota$ arches with the greatest (resp. fewest) number of absent subsequences of length $k$? In this way, we build on the work of \cite{DBLP:conf/isaac/AdamsonFHKMN23} in looking at the properties of $k$-universality within languages. While \cite{DBLP:conf/isaac/AdamsonFHKMN23} considered the problem of deciding whether there exists a $k$-universal word in a given language, we look for the word that is closest (resp. furthest) to $k$-universality within the restricted language of $\iota$-universal words.
We give a tight characterisation for both. First, we describe the structure of words $w\in\Sigma^*$ that have the maximal number of existing subsequences, i.e., the minimal number of absent subsequences, of any given length $k\in\N$ among all words $w'\in\Sigma^*$ with the same number of arches (Section~\ref{sec:lowbound}). Additionally, we characterise the words fulfilling the above property and have the shortest possible length. After that, we exhibit a class of words $w_{\min}\in\Sigma^*$ with a minimal number of existing subsequences, i.e., a maximal number of absent subsequences among all words $w'\in\Sigma^*$ with the same number of arches (Section~\ref{sec:upbound}). Finally, we address the problem of enumerating the set of subsequences of given length of an input word (Section~\ref{sec:enum}). In particular, we give a novel, optimal enumeration algorithm with output linear delay of this set of subsequences with preprocessing time $O(|w|)$. This is then further improved to an incremental enumeration algorithm (where only the differences between two consecutive outputs are succinctly provided) with constant delay of this set of subsequences with preprocessing time $O(\vert w \vert)$.
%
%
%


	\section{Preliminaries}\label{sec:prelims}
	Let $\mathbb{N}$ be the set of all natural numbers, $\mathbb{N}_0 = \mathbb{N} \cup \{0\}$, 
$[n] = \{1,\ldots,n\}$, $[i,j]=[j]\backslash[i-1]$, and $[n]_0 := [n] \cup \{0\}$. An \emph{alphabet} 
$\Sigma$ is a non empty, finite set whose elements are called \emph{letters} or {\em symbols}.
Given $\sigma\in\N$, we fix  $\Sigma=\{\ta_1,\ldots,\ta_{\sigma}\}$.  A \emph{word} is a 
finite sequence of letters from $\Sigma$. Let $\Sigma^*$ be 
the set of finite words over $\Sigma$ and $\epsilon$ be the empty word. 
Set $\Sigma^+ := \Sigma^* \setminus \{\epsilon\}$. Let $w\in\Sigma^{\ast}$. 
For $n\in\N_0$, inductively define 
$w^0=\varepsilon$ and $w^n=ww^{n-1}$. 
The {\em length} of $w$ is the number of its letters; so $|\epsilon| = 0$. 
For all $k \in \mathbb{N}_0$ set $\Sigma^k := \{w \in \Sigma^* \mid |w| = k\}$ (define $\Sigma^{\leq k}$, $\Sigma^{\geq k}$ analogously). We denote $w$'s \nth{$i$} letter by $w[i]$ 
and $w[i:j]=w[i]\cdots w[j]$ if $i < j$,  and $\epsilon$ if $ i > j$ for all $i,j \in [\vert w \vert]$. 
Set $\letters(w) = \{\ta \in \Sigma \mid \exists i \in [|w|]: w[i] = \ta \}$ as $w$'s {\em alphabet} and 
for each $\ta \in \Sigma$ set $|w|_{\ta} = |\{i \in [|w|] \mid w[i]=\ta \}|$. The word $u \in 
\Sigma^*$ is a \emph{factor} (or infix) of $w$ if there exist $x,y \in \Sigma^*$ such that $w = xuy$.
We call $u$ a \emph{prefix} (\emph{suffix}) of $w$ if $x=\epsilon$ ($y = \epsilon$). 
We define the \emph{reverse} of $w$ by $w^R = w[|w|]\cdots w[1]$.
Let $<$ be a total order on $\Sigma$, and assume that $\ta_1<\ta_2<\ldots < \ta_\sigma$ are the letters of $\Sigma$; define $w_{\Sigma} \in \Sigma^{\sigma}$ as $w_{\Sigma}=\ta_1\cdots\ta_{\sigma}$.
For further definitions see \cite{lothaire}. 

\begin{definition}
	Let $w \in \Sigma^*$ and $n \in \mathbb{N}_0$. A word $u \in \Sigma^n$ is called a \emph{subsequence} of w - denoted as $u \in \ScatFact(w)$ - if there exist $v_1,\dots, v_{n+1} \in \Sigma^*$ such that $w = v_1 u[1] v_2 u[2] \cdots v_n u[n]$ $v_{n+1}$.
	Set $\ScatFact_k(w)=\{u\in\ScatFact(w)|\,|u|=k\}$.
\end{definition}

The words $\mathtt{cafe}$ and $\mathtt{ufo}$ are all subsequences of $\mathtt{cauliflower}$ but neither $\mathtt{flour}$ nor $\mathtt{row}$. Subsequences are tightly related to {\em Simon's congruence}. 

\begin{definition}
	Two words $w,v \in \Sigma^*$ are \emph{Simon congruent} w.r.t. $k\in\N_0$ (denoted $w \sim_k v$) if $\operatorname{\ScatFact}_{\ell}(w) = \operatorname{\ScatFact}_{\ell}(v)$ for all $\ell\leq k$. 
\end{definition}

Since $\ScatFact_k(w)\subseteq\Sigma^k$ holds for all $k\in\N_0$, determining the index of Simon's
congruence (i.e., the number of equivalence classes w.r.t.~$\sim_k$) can be split into the parametrised problem on determining how many subsequence sets 
- or equivalently how many different words - exist with $|\ScatFact_k(w)|=\sigma^k-m$ for all 
$m\in\N_0$ (cf. \cite{DBLP:conf/dlt/BarkerFHMN20,DBLP:conf/stacs/DayFKKMS21} for $m=0$).
We disregard all \(w \in \Sigma^{<k}\) because they form singleton classes. For all \(w\in\Sigma^{\geq k}\) it suffices to consider the set \(\ScatFact_k(w)\) instead of $\ScatFact_{\ell}(w)$ for all $\ell \leq k$ because all shorter subsequences are contained in a subsequence of length \(k\).

\begin{definition}
	For given $k,m\in\N_0$, a word $w \in \Sigma^*$ is called \emph{$m$-nearly $k$-universal} if $\vert \ScatFact_k(w) \vert $$= 
	\sigma^k-m$. 
	If $m=0$, the words are called $k$-universal and the set of all $k$-universal words over $\Sigma$ is $\Univ_{\Sigma,k}$. The largest $k\in\N_0$, such that a word is $k$-universal, is called the {\em universality index} $\iota(w)$.
\end{definition}


For an unary alphabet $\{a\}$, all words are of the form $\ta^{\ell}$ for $\ell\in\N_0$ and their subsequences can be determined easily.  Thus, we only consider here alphabets with two or more letters. Moreover, we assume $\Sigma=\letters(w)$ 
for a given $w$, if not stated otherwise.
One of the main tools used in the investigation of ($m$-nearly) $k$-universal words is the \emph{arch factorisation}, introduced by Hébrard \cite{DBLP:journals/tcs/Hebrard91}. 

\begin{definition}
	For $w \in \Sigma^*$ the \emph{arch factorisation} is $w = \ar_1(w) \cdots \ar_k(w) \r(w)$ for some $k \in \mathbb{N}_0$ with\\
	(a) $\iota(\ar_i(w))=1$ for all $i \in [k]$, \\
	(b) $\ar_i(w)[\vert \ar_i(w) \vert] \notin \letters(\ar_i(w)[1 : \vert \ar_i(w) \vert - 1 ])$ for all $i \in [k]$, and\\
	(c) $\letters(\r(w)) \subsetneq \Sigma$.\\
	The words $\ar_i(w)$ are called \emph{arches} and $\r(w)$ is the \emph{rest} of $w$.
	The {\em modus} $\m(w)$ is given by  $\ar_1(w)[\vert\ar_1(w)\vert] \cdots\ar_k(w)[\vert\ar_k(w)\vert]$, i.e., it is the word containing the unique last letters of each arch.
	The \emph{inner part of the $i^{th}$ arch of $w$}, denoted $\inner_i(w)$, is defined as $ \inner_i(w) = \ar_i(w)[1 : \vert \ar_i(w) \vert - 1 ]$.
\end{definition}

To visualise the arch factorisation in explicit examples we use parenthesis, e.g.,  $(\mathtt{aab})\cdot (\mathtt{bba}) \cdot \ta$ has two arches and the rest $\ta$. Here, the modus is $\tb\ta$ and the two inner parts are $\ta\ta$ and $\tb\tb$ resp.

\begin{definition}
	We call a word $w \in \Sigma^*$ \emph{perfect $k$-universal} if $\iota(w)=k$ and $\r(w) = \epsilon$. The set of all these words with $\letters(w)=\Sigma$ is denoted by $\PerfUniv_{\Sigma,k}$.
	Additionally, we call $w$ \emph{minimal perfect $k$-universal} if $w\in\PerfUniv_{\Sigma,k}$ and $w$ is of minimal length among all words in $\PerfUniv_{\Sigma,k}$.
	The set of those words over $\Sigma$ is denoted by $\MinPerfUniv_{\Sigma,k}$.
\end{definition}

For the algorithmic results we use the standard 
computational model RAM with logarithmic word-size (see, e.g., \cite{KarkkainenSB06}). We also 
follow a standard assumption from stringology, that if $w$ is the 
input word for our algorithms, then we assume that $w$ is over an alphabet $\Sigma=\{ 1,2, \ldots, \sigma\}$ for some $\sigma\leq |w|$.


	
	
	\section{Lower Bound for the Number of Absent Subsequences}\label{sec:lowbound}
	In this section, we consider the following problem: for a specific alphabet $\Sigma$, universality index $\iota\in\N$, and number $k$, what is the minimum number of absent subsequences of length $k$ for any word $w$ with $\iota(w) = \iota$? Our approach is to construct a word that contains as many subsequences as possible (and thus as few absent ones as possible) and compute its number of absent subsequences. Additionally, we are particularly interested in finding the shortest words fulfilling this property. We start by understanding which words are always absent subsequences from $\ScatFact_k(w)$ (for $w$ with the above properties) and show that there actually exist words in which only these words are absent, providing a tight lower bound on the number of absent subsequences.

\ifpaper
\else 
\begin{lemma}\label{lemma:atleast-absent-sf}
    Let $ w \in \Sigma^*$ and let $k \in \N$ with $k > \iota(w)$. Then, there exists $\ta \in \Sigma\setminus \al(\r(w))$ such that, for every $v \in \Sigma^k$ with $\m(w)\ta \in \ScatFact_{\iota(w)+1}(v)$, we have $v\notin\ScatFact_k(w)$.
\end{lemma}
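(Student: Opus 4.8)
The plan is to pick $\ta$ explicitly as any letter in $\Sigma\setminus\letters(\r(w))$; such a letter exists by property (c) of the arch factorisation. The heart of the argument is the intermediate claim that $\m(w)\ta\notin\ScatFact(w)$, after which the full statement drops out from transitivity of the scattered-factor relation. Note that the hypothesis $k>\iota(w)$ is exactly what makes the length condition $\m(w)\ta\in\ScatFact_{\iota(w)+1}(v)$ satisfiable for some $v\in\Sigma^k$, so the statement is not vacuous.

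To prove $\m(w)\ta\notin\ScatFact(w)$, I would analyse the leftmost embedding of $\m(w)$ into $w=\ar_1(w)\cdots\ar_{\iota(w)}(w)\,\r(w)$. By property (b) the letter $\m(w)[i]$ is the last letter of $\ar_i(w)$ and does not occur in $\inner_i(w)$; hence, matching $\m(w)$ greedily from the left, the first available occurrence of $\m(w)[1]$ is the final position of $\ar_1(w)$, then — continuing in $\ar_2(w)\cdots\ar_{\iota(w)}(w)\r(w)$ — the first available occurrence of $\m(w)[2]$ is the final position of $\ar_2(w)$, and so on by induction. Thus the leftmost embedding of $\m(w)$ aligns $\m(w)[\iota(w)]$ with the last position of $\ar_{\iota(w)}(w)$, leaving exactly $\r(w)$ to its right. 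Since every embedding of a scattered factor dominates the leftmost one position by position (the standard exchange argument), in any embedding of $\m(w)$ into $w$ the final letter $\m(w)[\iota(w)]$ sits at index at least $|\ar_1(w)\cdots\ar_{\iota(w)}(w)|$, so the part of $w$ available after it is a suffix of $\r(w)$. As $\ta\notin\letters(\r(w))$, the letter $\ta$ has no occurrence there, and therefore $\m(w)\ta\notin\ScatFact(w)$.

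Finally, let $v\in\Sigma^k$ with $\m(w)\ta\in\ScatFact_{\iota(w)+1}(v)$. If we had $v\in\ScatFact_k(w)$, then composing the embedding of $\m(w)\ta$ into $v$ with the embedding of $v$ into $w$ would yield $\m(w)\ta\in\ScatFact(w)$, contradicting the claim above; hence $v\notin\ScatFact_k(w)$, as required. The only non-routine ingredient is the greedy-embedding claim for $\m(w)\ta$, which is essentially the observation — implicit in Hébrard's arch factorisation and used throughout the universality literature — that the leftmost embedding of the modus consumes precisely the arches and leaves the rest; everything else is immediate.
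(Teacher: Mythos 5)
Your proof is correct and takes essentially the same approach as the paper's: choose $\ta\notin\letters(\r(w))$, observe that each modus letter occurs only at the final position of its arch so that any embedding of $\m(w)$ into $w$ exhausts all the arches, and conclude that $\ta$ would have to occur in $\r(w)$ — a contradiction. The paper compresses this into one line; you have merely made the greedy leftmost-embedding argument and the transitivity step explicit.
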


\begin{proof}
    Note first that $\alphabet(\r(w)) \neq \Sigma$. Therefore, there exists some $\ta\in\Sigma$ with $\ta\notin\alphabet(\r(w))$. As every letter of $\m(w)$ is unique within its arch, we know that any $v\in\ScatFact_k(w)$
    with $\m(w) \ta \in \ScatFact_{\iota(w)+1}(v)$ implies $\ta\in\r(w)$ - a contradiction.\qed
\end{proof}

\fi



\ifpaper
\else
Lemma \ref{lemma:atleast-absent-sf} gives us the number of subsequences of length $k$ which are missing in any word, i.e., a lower bound on $m$. We now show that this bound is tight.
\fi

\ifpaper
\else
\begin{lemma}\label{lemma:min-absent-sf-strucure}
    Given $\iota,k\in\N$ such that $k > \iota$, and $\ta\in\Sigma$, there exists $w\in\Sigma^*$ with $\iota(w) = \iota$ such that we have $\ScatFact_k(w)=\{v\in \Sigma^k\mid \m(w)\ta$ is not a subsequence of $v\}$
\end{lemma}

\begin{proof}
Let $k,\iota\in\N$ and set $\Sigma^-=\{\ta_1,\ldots,\ta_{\sigma-1}\}$. W.l.o.g. we assume $\ta=\ta_{\sigma}$ (otherwise we could permute $\Sigma$).
    Set $w = (w_{\Sigma^-}^{k-\iota}\ta_\sigma)^{\iota}w_{\Sigma^-}^{k-\iota}$ with $\iota(w)=\iota$, $\m(w)=\ta_{\sigma}^{\iota}$. 
    Let $v\in\Sigma^k\backslash\ScatFact_k(w)$. By the construction of $w$, each word $u\in\Sigma^k$ with $|u|_{\ta_{\sigma}}\leq\iota$ is a subsequence of $w$. Thus, we have $|v|_{\ta_{\sigma}}>\iota$ and we have $\m(w)\ta\in\ScatFact_k(v)$.\qed
\end{proof}

\fi

\ifpaper
\else
Together, these lemmata yield the following proposition.
\fi

\begin{proposition}\label{corollary:min-absent-sf-main-property}
    For $k\in\N$, $w \in \Sigma^*$ with $k > \iota(w)$, the next are equivalent:\\
    1) For all  $w' \in \Sigma^*$  with $\iota(w')=\iota(w)$, we have $\vert \ScatFact_k(w') \vert \leq \vert \ScatFact_k(w) \vert$ (i.e., $w$ is a word with a maximal number of subsequences of length $k$, over all words with the same universality index).\\
    2) There exists $\ta\in \Sigma$ such that $\Sigma \setminus \al(\r(w))=\{\ta\}$  and $\ScatFact_k(w)=\{v\in \Sigma^k\mid \m(w)\ta$ is not a subsequence of $v\}$.
    
    
%
%
%
%

\end{proposition}

\ifpaper
\else
\begin{proof}
Let us first prove that $1\Rightarrow 2$. 

Let $\iota=\iota(w)$. Clearly, there exists $\ta\in \Sigma\setminus \al(\r(w))$ and assume $\Sigma=\{\ta_1,\ldots,\ta_\sigma\}$ with $\ta_\sigma=\ta$. 

We define $w' = (\inner_1(w))^k\ar_1(w) \cdots (\inner_\iota(w))^k\ar_\iota(w) (\ta_1\cdots \ta_{\sigma-1})^k\r(w)$. It is immediate to note that $\iota(w')=\iota(w)=\iota$, $m(w')=m(w)$, and $\al(\r(w))=\Sigma\setminus \{\ta\}$. Clearly, as $w$ is a subsequence of $w'$, we have that $\ScatFact_k(w)\subseteq \ScatFact_k(w')$; but $w$ is a word with a maximal number of subsequences of length $k$, over all words with the same universality $\iota$, so $\ScatFact_k(w)= \ScatFact_k(w')$. 

Let us show that $\ScatFact_k(w')=\{v\in \Sigma^k\mid \m(w)\ta$ is not a subsequence of $v\}$. As $\m(w)\ta$ is not a subsequence of $w'$, we have that $\ScatFact_k(w')\subseteq \{v\in \Sigma^k\mid \m(w)\ta$ is not a subsequence of $v\}$. Now, consider $v$ to be a word of length $k$ that does not have $y=\m(w)\ta$ as a subsequence. We can write $v= v_1 \m(w)[1] \cdots v_\ell \m(w)[\ell] v_{\ell+1}$, where $m(w)[i] \notin \al(v_i)$, for $i\in [\ell]$, and, if $\ell=\iota$, then $\ta \notin \al(v_{\ell+1})$, or, if $\ell < \iota$, then $\m(w)[\ell+1] \notin \al(v_{\ell+1})$. In both cases, as $|v_i|\leq k$ for all $i\in [\ell+1]$, we have that $v_\ell \m(w)[\ell]$ is a subsequence of $(\inner_\ell(w))^k\ar_\ell(w)$. If $\ell=\iota$, then $v_{\ell+1}$ is a subsequence of $(\ta_1\cdots \ta_{\sigma-1})^k\r(w)$, and, if $\ell < \iota$, then $v_{\ell+1}$ is a subsequence of $(\inner_{\ell+1}(w))^k\ar_{\ell+1}(w)$. So, $v$ is a subsequence of $w'$. 

Now, assume that there exists $\tb\in \Sigma\setminus \al(\r(w))$ with $\tb\neq \ta$. Then, $\m(w)\tb$ is a subsequence of $(\inner_1(w))^k\ar_1(w) \cdots (\inner_\iota(w))^k\ar_\iota(w) (\ta_1\cdots \ta_{\sigma-1})$, so there is a word $y$ of length $k$, that has $\m(w)\tb$ as prefix which is a subsequence of $w' = (\inner_1(w))^k\ar_1(w) \cdots (\inner_\iota(w))^k\ar_\iota(w) (\ta_1\cdots \ta_{\sigma-1})^k\r(w)$. But $y$ cannot be a subsequence of $w$, which is a contradiction with the fact that $\ScatFact_k(w)= \ScatFact_k(w')$. So the statement holds. 

We can now prove that $2\Rightarrow 1$. 

Note that for some $n\geq k$, the number of words of length $n$, over $\Sigma$, which have a word $v$, of length $k$, as a subsequence equals the number of words of length $n$, over $\Sigma$, which have a word $v'\neq v$, of length $k$, as a subsequence. 

Let us assume, for the sake of a contradiction, that there is a word $w'$, with $\iota(w')=\iota(w)$, which has strictly more subsequences of length $k$ than $w$. After a potential renaming of the letters of $w'$, we can assume that $\ta\notin \al(\r(w'))$. Therefore, $w'$ has no subsequence of length $k$ which has $\m(w')\ta$ as a subsequence. So, $\ScatFact_k(w')\subseteq \Sigma^k \setminus \{v\in \Sigma^k\mid \m(w')\ta$ is not a subsequence of $v\}$. Thus, $|\ScatFact_k(w')|\leq |\Sigma^k| - |\{v\in \Sigma^k\mid \m(w')\ta$ is not a subsequence of $v\}|=  |\Sigma^k| - |\{v\in \Sigma^k\mid \m(w)\ta$ is not a subsequence of $v\}|=|\ScatFact_k(w)|$, a contradiction to our assumtion that $w'$ has strictly more subsequences of length $k$ than $w$. Our statement follows. 

Based on Lemma \ref{lemma:min-absent-sf-strucure}, one can note that a word $w$, with $\iota(w)=\iota$, with a maximal number of subsequences of length $k$, over all words with the same universality, can be defined as $w=((\ta_1 \cdots \ta_{\sigma-1})^{k-\iota(w)} \ta_\sigma)^\iota \cdot$  $(\ta_1 \cdots \ta_{\sigma-1})^{k-\iota(w)}$. \qed
 \end{proof}

\fi

Intuitively, for all words $w$ that are $n$-universal for some $n\in\N$, we know that their rest $\r(w)$ always misses at least one letter $\ta$. Additionally, we know that its modus $\m(w)$ occurs only once in $w$ and ends right before $\r(w)$. Hence, all words $v$ of length $k > |n| = |\m(w)|$ that have $\m(w)\ta$ as a subsequence, can in fact not be a subsequence of $w$. In fact, in the framework of Proposition \ref{corollary:min-absent-sf-main-property}, we can even construct a word $w_{\max}$ in which only subsequences $v\in\Sigma^k$ with $\m(w_{\max})\ta\in\ScatFact(v)$ are absent subsequences of length $k$, i.e., $v\notin\ScatFact_k(w_{\max})$. Assume $\Sigma=\{\ta_1,\ldots,\ta_\sigma\}$ with $\ta_\sigma = \ta$. Consider 
$$ w_{\max}=((\ta_1 \cdots \ta_{\sigma-1})^{k-\iota(w)} \ta)^\iota \cdot (\ta_1 \cdots \ta_{\sigma-1})^{k-\iota(w)}.$$ 
By Proposition \ref{corollary:min-absent-sf-main-property}, $w_{\max}$ must have the maximal number of subsequences of length $k$, i.e., the minimal number of absent subsequences of length $k$, of all words $w\in\Sigma^*$ with $\iota(w) = \iota(w_{\max}) = n$.
In general, assume $w$ to be a word with a maximal number of subsequences of length $k$ over all words with the same universality index $\iota(w)$. By Proposition \ref{corollary:min-absent-sf-main-property} and the existence of $w_{\max}$, we know that all absent subsequences of $w$ contain $\m(w)\ta$ as a subsequence themselves. Hence, we can derive a tight bound on the minimum number of absent subsequences of $w$. 

\begin{proposition}
    \label{proposition:min-absent-sf-count-max-factors}
    Let $k\in\N_0$ and $w \in \Sigma^*$ with $k > \iota(w)$ such that, for all $w' \in \Sigma^*$ with $\iota(w) = \iota(w')$, we have $\vert \ScatFact_k(w) \vert \geq \vert \ScatFact_k(w') \vert$. Then,
    $\vert \ScatFact_k(w) \vert = \sum_{j \in [0, \iota(w)]} \binom{k}{j} (\sigma-1)^{k - j}.$
\end{proposition}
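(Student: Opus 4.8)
The plan is to turn the count into a purely combinatorial question: how many length-$k$ words over $\Sigma$ fail to contain one fixed word of length $\iota(w)+1$ as a scattered factor. Write $\iota=\iota(w)$. Since $w$ attains the maximum of $\vert\ScatFact_k(\cdot)\vert$ among all words of universality index $\iota$, the direction 1)$\Rightarrow$2) of Proposition~\ref{corollary:min-absent-sf-main-property} hands us a letter $\ta\in\Sigma$ with $\ScatFact_k(w)=\{v\in\Sigma^k : \m(w)\ta\notin\ScatFact_{\iota+1}(v)\}$. Because $\vert\m(w)\ta\vert=\iota+1\le k$, a word of length $\iota+1$ lies in $\ScatFact_{\iota+1}(v)$ exactly when it is a scattered factor of $v$, so $\vert\ScatFact_k(w)\vert$ is the number of $v\in\Sigma^k$ that do not have the fixed word $u:=\m(w)\ta\in\Sigma^{\iota+1}$ as a scattered factor.

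The core step is then to show that for \emph{every} fixed word $u=u_1\cdots u_\ell\in\Sigma^\ell$ with $\ell\le k$, the number of $v\in\Sigma^k$ with $u\notin\ScatFact(v)$ equals $\sum_{j=0}^{\ell-1}\binom{k}{j}(\sigma-1)^{k-j}$, independently of which letters $u$ uses. I would argue this via the greedy left-to-right embedding: scanning $v$, match $u_1$ at its first occurrence, then $u_2$ at its first occurrence thereafter, and so on, and let $\rho(v)\in\{0,\dots,\ell\}$ be the number of symbols of $u$ matched this way; the standard fact that greedy matching is optimal gives $u\in\ScatFact(v)\iff\rho(v)=\ell$, so it remains to count the $v$ with $\rho(v)=j$ for each $j\le\ell-1$. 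Such $v$ are exactly those factoring uniquely as $v=x_1 u_1 x_2 u_2\cdots x_j u_j x_{j+1}$ with $x_i\in(\Sigma\setminus\{u_i\})^\ast$ for $i\in[j]$ and $x_{j+1}\in(\Sigma\setminus\{u_{j+1}\})^\ast$ — the $x_i$ being the gaps preceding the successive first matches, and $x_{j+1}$ the leftover suffix, which must avoid $u_{j+1}$ precisely because the match stalls there. Since $\sum_i\vert x_i\vert=k-j$, there are $\binom{k}{j}$ ordered length-patterns $(\vert x_1\vert,\dots,\vert x_{j+1}\vert)$, and each of the $k-j$ positions inside the blocks is filled freely with one of the $\sigma-1$ letters other than the letter forbidden in its block, contributing $(\sigma-1)^{k-j}$ no matter the pattern; hence exactly $\binom{k}{j}(\sigma-1)^{k-j}$ words have $\rho(v)=j$, and summing over $j$ proves the fact.

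Applying this fact with $\ell=\iota+1$ and $u=\m(w)\ta$ gives $\vert\ScatFact_k(w)\vert=\sum_{j=0}^{\iota}\binom{k}{j}(\sigma-1)^{k-j}$, the value asserted in the proposition. I expect the delicate point to be the bijective factorisation in the second step: one must check both that every $v$ with $\rho(v)=j$ decomposes in the stated form and, conversely, that every product $x_1 u_1\cdots x_j u_j x_{j+1}$ of that form has greedy-match length exactly $j$. This is where the condition that $x_{j+1}$ avoids $u_{j+1}$ is used, and hence where the constraint $j\le\ell-1$ enters — which is precisely what fixes the summation range at $0,\dots,\iota$; everything else is routine counting. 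One could alternatively avoid the explicit bijection and prove the counting fact by induction on $k$, splitting on whether $v[1]=u_1$, which yields the recurrence $f(k,\ell)=(\sigma-1)f(k-1,\ell)+f(k-1,\ell-1)$ with $f(0,\ell)=1$ for $\ell\ge 1$ and $f(k,0)=0$, a recurrence the claimed sum satisfies by Pascal's identity.
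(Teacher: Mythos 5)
Your proof is correct and takes essentially the same route as the paper's: both reduce, via Proposition~\ref{corollary:min-absent-sf-main-property}, to counting the words of $\Sigma^k$ that avoid the fixed word $\m(w)\ta$ of length $\iota(w)+1$ as a scattered factor, and both partition these words according to the longest matched prefix (leftmost/greedy embedding), counting $\binom{k}{j}(\sigma-1)^{k-j}$ words per class. Your explicit block factorisation $x_1u_1\cdots x_ju_jx_{j+1}$ is just a slightly more carefully stated version of the paper's leftmost-index decomposition.
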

\ifpaper
\else
\begin{proof}
	By Proposition~\ref{corollary:min-absent-sf-main-property}, we must determine the cardinality of 
	$\ScatFact_k(w)=\{v\in\Sigma^k|\,\m(w)\ta\not\in\ScatFact_{\iota(w)+1}(v)\}$ for some $\ta \in \Sigma$. Thus, we have to count the number of words of $\Sigma^k$ which do not contain a given $v'\in\Sigma^{\iota(w)+1}$ as a subsequence. We do so by considering words $v$ in $\Sigma^k$ such that some proper prefix of $v'$ is not a subsequence of $v$.
	First, the number of words that contain no prefix of $v'$ (equivalently, only the $0$ length prefix corresponding to the empty word) is $(\sigma - 1)^k$, corresponds to the set of words that do not contain the symbol $v'[1]$. 
	To count the number of words that do not contain the prefix $v'[1: 2]$, but do contain the prefix $v'[1]$, observe that any such word $v$ must contain some prefix $v[1:\ell]$ that does not contain the symbol $v'[1]$, followed the suffix $v'[1] v[\ell + 2:k]$ where $v[\ell + 2: k]$ does not contain the symbol $v'[2]$.  As there are $k$ possible positions at which $v'[1]$ could occur, and $(\sigma - 1)^{k - j}$ possible values of the remaining symbols for each such position $\ell \in [k]$, we have $(\sigma - 1)^{k - j} k$ possible words containing $v'[1]$ as a subsequence, and not $v'[1: 2]$.\looseness=-1
	
	In general, consider the number of words in $\Sigma^k$ that contain $v'[1: j]$ as a subsequence, but not $v'[1: j + 1]$. Observe that any such word $v$ can be expressed as $v[1: \ell_1 - 1] v'[1] v[\ell_1 + 1: \ell_2 - 1] v'[2] \cdots v'[j - 1] v[\ell_{j - 1} + 1: \ell_j - 1] v'[j] v[\ell_{j} + 1: k]$, with $v[\ell_{x}] = v'[x]$, $\forall x \in [j], v'[j+1]\not\in\letters(v[\ell_j+1:k])$.\looseness=-1
	
	For any given set of indices $\ell_1, \ell_2, \dots, \ell_j$ we have for the number of different $v\in\Sigma^k$ 
	\[
	\prod_{x \in [j + 1]}(\sigma - 1)^{\ell_{x} - 1 - \ell_{x - 1}} = (\sigma - 1)^{k+1 - 0} \prod_{x \in [j + 1]} (\sigma-1)^{-1} = (\sigma - 1)^{k-j}, 
	\]
	with $\ell_0 = 0$, $\ell_{j + 1} = k+1$. 
	There are $\binom{k}{j}$ possible indices $ 1 \leq \ell_1 < \dots < \ell_{j} \leq k$, i.e., $\binom{k}{j} \sigma^{k - j}$ such words. By extension, $\sum_{j \in [0, 
		\iota(w)]} \binom{k}{j} \sigma^{k - j}$ is the number of subsequences of $w$.\qed
\end{proof}
\fi

\begin{corollary}
    \label{corollary:min-absent-sf-count}
    Let $k\in\N_0$, $w \in \Sigma^*$ with $k > \iota(w)$ such that, for all $w' \in \Sigma^*$ with $\iota(w) = \iota(w')$, $\vert \ScatFact_k(w) \vert \geq \vert \ScatFact_k(w') \vert$ holds. The number of absent subsequences of length $k$ of $w$ is $\sigma^k - \sum_{j \in [0, \iota(w)]} \binom{k}{j} (\sigma-1)^{k - j}.$
\end{corollary}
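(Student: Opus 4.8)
\textbf{Proof proposal for Corollary~\ref{corollary:min-absent-sf-count}.} The plan is to obtain the statement as a direct complement of Proposition~\ref{proposition:min-absent-sf-count-max-factors}. The hypotheses on $w$ are exactly those of that proposition: $k \in \N_0$, $k > \iota(w)$, and $w$ maximises $\vert\ScatFact_k(w)\vert$ among all words with the same universality index (here $\iota' := \iota(w)$). Hence Proposition~\ref{proposition:min-absent-sf-count-max-factors} applies and gives $\vert\ScatFact_k(w)\vert = \sum_{j \in [0,\iota']} \binom{k}{j}(\sigma-1)^{k-j}$.

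Next I would observe that every scattered factor of length $k$ of $w$ is a word in $\Sigma^k$, and $\vert\Sigma^k\vert = \sigma^k$ since $\card{\Sigma} = \sigma$. The set of absent scattered factors of length $k$ of $w$ is by definition $\Sigma^k \setminus \ScatFact_k(w)$, and since $\ScatFact_k(w) \subseteq \Sigma^k$ we get $\vert \Sigma^k \setminus \ScatFact_k(w)\vert = \sigma^k - \vert\ScatFact_k(w)\vert$. Substituting the value from the previous step yields the claimed quantity $\sigma^k - \sum_{j \in [0,\iota']}\binom{k}{j}(\sigma-1)^{k-j}$, which completes the argument.

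There is no real obstacle here: the substantive combinatorial counting was already carried out in Proposition~\ref{proposition:min-absent-sf-count-max-factors}, and the corollary is purely the arithmetic passage from ``present'' to ``absent''. The only point that deserves a second glance is purely notational, namely confirming that $\iota'$ in the statement is intended to abbreviate $\iota(w)$ (so that the sum ranges over the same index set as in the proposition); if the convention $\iota' = \iota(w)$ is made explicit earlier or inline, the proof is a single line.
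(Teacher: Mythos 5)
Your proposal is correct and matches the paper's (implicit) argument: the corollary is presented as an immediate complement of Proposition~\ref{proposition:min-absent-sf-count-max-factors}, obtained by subtracting $\vert\ScatFact_k(w)\vert$ from $\vert\Sigma^k\vert=\sigma^k$, exactly as you do. Your side remark that $\iota'$ should be read as $\iota(w)$ is also right; that is a notational slip in the statement rather than a mathematical gap.
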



\begin{example}\label{example:max-SF-word}
    Let $\Sigma := \{\ta,\tb,\tc\}$, $k = 4$, and $\iota = 2$. Then $w = (\ta\tb\ta\tb\tc)\cdot(\ta\tb\ta\tb\tc)\cdot\ta\tb\ta\tb$ is a word with a minimal number of absent subsequences of length $k$ among all words with universality index $\iota$.
    Note that exactly those words of length $4$ that are absent subsequences of $w$ have $\tc\tc\tc$ as a subsequence. 
\end{example}

Let $\iota, k\in\N$ with $k > \iota$, 
and consider $w\in\Sigma^*$, with $\iota(w) = \iota$, such that for all $w'\in\Sigma^{\geq |w|}$ with $\iota(w) = \iota(w')$ we have 
$|\ScatFact_k(w)| \geq |\ScatFact_k(w')|$ 
and, for all $w'\in\Sigma^{< |w|}$ with $\iota(w) = \iota(w')$ we have 
$|\ScatFact_k(w)| > |\ScatFact_k(w')|$. Then $w$ is the shortest word over $\Sigma$ which has $ \sum_{j \in [0, \iota(w)]} \binom{k}{j} (\sigma-1)^{k - j}$ (i.e., the maximum number of) subsequences of length $k$, among all words with universality index~$\iota$.\looseness=-1

We can now determine the structure of these shortest words.
\ifpaper
We first show that in such a word $w$, the $i^{th}$ arch fulfils $\inner_i(w) \in \Univ_{(\Sigma \setminus \{ \m(w)[i] \}), k - \iota}$, for all $i\in [\iota]$, and also $\r(w) \in \Univ_{(\Sigma \setminus \{ \ta \}), k - \iota}$, for some $\ta\in \Sigma$. From this, and the example-word $w_{\max}$ above, we can first derive their length.
\else 
We begin with two structural properties. 
\fi

\ifpaper
\else
\begin{lemma}\label{lemma:min-abssent-arches-k-iota-universal}
Let $k\in\N$ and $w\in\Sigma^{\ast}$ with $k>\iota(w)$ such that $\vert \ScatFact_k(w) \vert \geq \vert \ScatFact_k(w') \vert$ for every $w' \in \Sigma^*$ with $\iota(w') = \iota(w)$. 
Then,  for all $i \in [\iota(w)]$
\[
\inner_i(w) \in \Univ_{(\Sigma \setminus \{ \m(w)[i] \}), k - \iota(w)}.
\]
\end{lemma}

\begin{proof}
	Suppose that there exists some $i \in [\iota(w)]$ such that 
	\[
	\inner_i(w) \notin \Univ_{(\Sigma\setminus\{\m(w)[i]\}),k-\iota(w)}.
	\]
	Then there exists some $u\in(\Sigma\setminus\{\m(w)[i]\})^{k-\iota(w)}$
	such that 
	\[
	u\notin\ScatFact_{k-\iota(w)}(\inner_i(w)).
	\]
	Now, observe that there exists some $\tb\in\Sigma$ such that $\tb\notin\alphabet(\r(w))$.
	Set $v = \m(w)[1:i-1]u \m(w)[i+1:\iota(w)]\tb$. 
	Then, $v\notin\ScatFact_k(w)$ since $u$ is not contained in the \nth{$i$} arch. Hence we have a contradiction to \Cref{corollary:min-absent-sf-main-property} as $\m(w)\tb\notin\ScatFact_{\iota(w)+1}(v)$. \qed
\end{proof}

\fi




\ifpaper
\else
\begin{lemma}\label{remark:min-absent-rest-k-iota-universal}
    Let $k\in\N$, $w\in\Sigma^{\ast}$ with $k>\iota(w)$ such that $\vert \ScatFact_k(w) \vert \geq \vert \ScatFact_k(w') \vert$ for every $w' \in \Sigma^*$ with $\iota(w') = \iota(w)$. 
    Then, we have for some $\ta\in\Sigma$ 
    \[
    \r(w) \in \Univ_{(\Sigma \setminus \{ \ta \}), k - \iota(w)}.
    \]
\end{lemma}

\begin{proof}
	By Proposition \ref{corollary:min-absent-sf-main-property} we know that there exists some $\ta\in\Sigma$ such that
	for all $v\in\Sigma^k$ with $v\notin\ScatFact_k(w)$ we have $\m(w)\ta\in\ScatFact_{\iota(w)+1}(v)$.
	Suppose $\r(w) \notin \Univ_{(\Sigma \setminus \{ \ta \}), k - \iota(w)}$.
	Then there exists some $u\in\Sigma^{k-\iota(w)}$, $|u|_\ta = 0$, and $u\notin\ScatFact_{k-\iota(w)}(\r(w))$.
	By that, we know that $\m(w)u\notin\ScatFact_k(w)$. But this is a contradiction to \Cref{corollary:min-absent-sf-main-property}, as $\m(w)\ta\notin\ScatFact_{\iota(w)+1}(v)$.\qed
\end{proof}
\fi

\ifpaper
\else
This gives us a lower bound on the size of each arch $\arch_i(w)$ and the rest $\r(w)$ of a word $w\in\Sigma^*$. We now show that this bound is tight.
\fi

\ifpaper
\else
\begin{lemma}\label{lemma:min-absent-shortest-upper-bound}
    Let $k, \iota\in \N$ with $k > \iota$. There exists $w\in\Sigma^*$ with $\iota(w)=\iota$ and $\vert w \vert = (\sigma-1)\cdot(k-\iota)\cdot(\iota+1)+\iota$ such that, for all 
    $w'\in\Sigma^*$ with $\iota(w)  = \iota(w')$, $\vert\ScatFact_k(w')\vert \leq \vert\ScatFact_k(w)\vert$.
\end{lemma}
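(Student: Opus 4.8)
The plan is to exhibit an explicit word $w$ of the claimed length and verify both that $\iota(w)=\iota$ and that it attains the maximum number of length-$k$ scattered factors among all $\iota$-universal words, by checking the structural characterisation from Proposition~\ref{corollary:min-absent-sf-main-property}. Guided by Lemmata~\ref{lemma:min-abssent-arches-k-iota-universal} and~\ref{remark:min-absent-rest-k-iota-universal}, I want each inner $\inner_i(w)$ to be a \emph{shortest} word in $\Univ_{(\Sigma\setminus\{\m(w)[i]\}),\,k-\iota}$ and the rest $\r(w)$ to be a shortest word in $\Univ_{(\Sigma\setminus\{\ta\}),\,k-\iota}$ for the distinguished letter $\ta$ given by Proposition~\ref{corollary:min-absent-sf-main-property}. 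The natural candidate is to take, for a fixed letter $\ta=\ta_\sigma$, each arch to be $(w_{\Sigma^-})^{k-\iota}\ta_\sigma$ where $\Sigma^-=\Sigma\setminus\{\ta_\sigma\}$, and the rest to be $(w_{\Sigma^-})^{k-\iota}$; concretely $w=\bigl((w_{\Sigma^-})^{k-\iota}\ta_\sigma\bigr)^{\iota}(w_{\Sigma^-})^{k-\iota}$. This is exactly the witness word already used in the proof of Lemma~\ref{lemma:min-absent-sf-strucure} in the special case $k-\iota=1$, now scaled so that each inner is genuinely $(k-\iota)$-universal over its $(\sigma-1)$-letter subalphabet.

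First I would confirm the length: there are $\iota$ arches each of length $(\sigma-1)(k-\iota)+1$ and one rest of length $(\sigma-1)(k-\iota)$, giving $\iota\bigl((\sigma-1)(k-\iota)+1\bigr)+(\sigma-1)(k-\iota)=(\sigma-1)(k-\iota)(\iota+1)+\iota$, as required. Next I would verify the arch factorisation: since $w_{\Sigma^-}$ enumerates all $\sigma-1$ letters of $\Sigma^-$, the factor $(w_{\Sigma^-})^{k-\iota}\ta_\sigma$ has universality index exactly $1$ with $\ta_\sigma$ as its unique last new letter, so these are precisely the $\iota$ arches, $\m(w)=\ta_\sigma^{\,\iota}$, and $\r(w)=(w_{\Sigma^-})^{k-\iota}$ which omits $\ta_\sigma$; hence $\iota(w)=\iota$. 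Then I would check the key property: I claim $\ScatFact_k(w)=\{v\in\Sigma^k\mid \ta_\sigma^{\,\iota+1}\notin\ScatFact_{\iota+1}(v)\}=\{v\in\Sigma^k\mid |v|_{\ta_\sigma}\le\iota\}$. The inclusion $\subseteq$ follows because each arch and the rest contribute at most one $\ta_\sigma$, so any scattered factor has at most $\iota$ copies of $\ta_\sigma$. For $\supseteq$, given $v\in\Sigma^k$ with $|v|_{\ta_\sigma}\le\iota$, write $v=v_0\ta_\sigma v_1\ta_\sigma\cdots\ta_\sigma v_t$ with $t\le\iota$ and each $v_i\in(\Sigma^-)^*$ of total length $\le k-\iota$; embed $v_i$ into the $(i{+}1)$-th copy of the $(k-\iota)$-universal block $(w_{\Sigma^-})^{k-\iota}$ (using an arch's inner for $i<t$ and continuing into later arches/the rest as needed, consuming a $\ta_\sigma$ from a modus letter between consecutive blocks), which is possible because $|v_i|\le k-\iota$ means $v_i\in\ScatFact_{\le k-\iota}((w_{\Sigma^-})^{k-\iota})=(\Sigma^-)^{\le k-\iota}$. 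Thus $w$ satisfies condition 2) of Proposition~\ref{corollary:min-absent-sf-main-property}, so by the equivalence there it satisfies condition 1), i.e. $|\ScatFact_k(w')|\le|\ScatFact_k(w)|$ for every $w'$ with $\iota(w')=\iota$.

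The main obstacle is the careful bookkeeping in the $\supseteq$ direction: I must make sure that when the pieces $v_0,\dots,v_t$ have total length up to $k-\iota$ but individual pieces may be long, they still fit into the available blocks in order, and that the $\ta_\sigma$'s separating them can be taken from distinct modus positions in the right order. The clean way is to argue greedily left to right — maintain a pointer in $w$, and for the next letter of $v$ either match it inside the current $(w_{\Sigma^-})^{k-\iota}$ block (each such block, being $(k-\iota)$-universal over $\Sigma^-$ and hence containing every $(\Sigma^-)$-word of length $\le k-\iota$ as a scattered factor, absorbs the next maximal $\Sigma^-$-run of $v$) or, if it is $\ta_\sigma$, advance past the current modus letter $\ta_\sigma$ into the next block; since $v$ has at most $\iota$ letters $\ta_\sigma$ and $w$ has $\iota$ modus letters followed by the rest, the pointer never runs out. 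Once this embedding is spelled out, everything else is the routine length computation and the direct appeal to Proposition~\ref{corollary:min-absent-sf-main-property}.
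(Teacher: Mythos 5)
Your proposal is correct and follows essentially the same route as the paper: the same witness word $w=(w_{\Sigma^-}^{k-\iota}\ta_\sigma)^{\iota}w_{\Sigma^-}^{k-\iota}$, the same identification $\ScatFact_k(w)=\{v\in\Sigma^k\mid |v|_{\ta_\sigma}\le\iota\}$, and the same final appeal to Proposition~\ref{corollary:min-absent-sf-main-property} (the paper in fact only asserts ``as $|v'|=k$, $v'\in\ScatFact_k(w)$'' where you supply the greedy embedding). Only beware the slip that the pieces $v_0,\dots,v_t$ have total length $k-t\ge k-\iota$ rather than $\le k-\iota$, so individual pieces may exceed one block; your own left-to-right greedy argument in the last paragraph is the right repair, since a piece of length $\ell$ embeds into $\ell$ consecutive copies of $w_{\Sigma^-}$ and the block/modus count works out because $k>\iota$.
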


\begin{proof}
	The proof is similar to that of \Cref{lemma:min-absent-sf-strucure}.
	Let $\Sigma^-=\Sigma\backslash\{\ta_{\sigma}\}$ and $w = (w_{\Sigma^-}^{k-\iota} \ta_\sigma)^{\iota}w_{\Sigma^-}^{k-\iota}$.
	Note that for all $v\in\Sigma^k$ with $\m(w)\ta_\sigma = \ta_{\sigma}^{\iota+1} \in \ScatFact_{\iota+1}(v)$ we have $v \notin \ScatFact_k(w)$. Let $v'\in\Sigma^k$ such that  $\m(w)\ta_\sigma \notin \ScatFact_{\iota+1}(v')$. Then $\vert v'\vert_{\ta_\sigma} < \iota + 1$ as $\m(w) = \ta_{\sigma}^{\iota}$.
	As $\vert v' \vert = k$, $v'\in\ScatFact_k(w)$ and hence the claim follows from \Cref{corollary:min-absent-sf-main-property}.\qed
\end{proof}
\fi

\ifpaper
\else 
We can conclude the previous results in the following proposition.
\fi

\begin{proposition}\label{corollary:min-absent-shortest-word-length-tight}
    For $\iota, k\in\N$ with $k > \iota$, let $w$ be the shortest word over $\Sigma$ which has a maximum number of subsequences of length $k$, among all words with universality index $\iota$.
    Then, we have $ |w| = (\iota+1)\cdot(\sigma -1)\cdot(k-\iota) + \iota.$
\end{proposition}
\ifpaper
\begin{proof}
	Let $\Sigma^-=\Sigma\backslash\{\ta_{\sigma}\}$ and $w = (w_{\Sigma^-}^{k-\iota} \ta_\sigma)^{\iota}w_{\Sigma^-}^{k-\iota}$ as explained before.
	$|w| = (\iota+1)\cdot(\sigma -1)\cdot(k-\iota) + \iota$. 
	Suppose there exists some $w'\in\Sigma^*$ with $\iota(w) = \iota(w')$
	and $|w'| < |w|$ such that $|\ScatFact_k(w')| \geq |\ScatFact_k(w)|$.
	Then either $|\r(w')| < |\r(w)|$ or $|\inner_i(w')| < |\inner_i(w)|$ for some $i\in[\iota]$.
	By construction outlined before 
	, $\r(w),\inner(w)\in\MinPerfUniv_{\Sigma^-,k-\iota}$. So, either $\r(w')\notin\Univ_{\Sigma^-,k-\iota}$ or
	$\inner_i(w')\notin\Univ_{\Sigma^-,k-\iota}$. Hence, by the above discussion on the form of the structure of every arch and the rest of words with a maximal amount of subsequences 
	we get that $|\ScatFact_k(w')| < |\ScatFact_k(w)|$ holds, which is a contradiction.\qed
\end{proof}
\else

\fi

\Cref{corollary:min-absent-shortest-word-length-tight} gives a tight bound on the length of the shortest words with a minimal number of absent subsequences, of fixed universality,
\ifpaper
allowing us to also identify the precise structure of such words, in \cref{corollary:min-absent-arch-rest-puniv-iota}. Ultimately, this leads to the main result of this section, namely \cref{prop:equiv}. 
\else
We can also strengthen the result from \Cref{lemma:min-abssent-arches-k-iota-universal} by the following statement that follows from \Cref{corollary:min-absent-shortest-word-length-tight}
and \Cref{lemma:min-abssent-arches-k-iota-universal,remark:min-absent-rest-k-iota-universal}.
\fi

\begin{proposition}\label{corollary:min-absent-arch-rest-puniv-iota}
    For $\iota, k\in\N$ with $k > \iota$, let $w$ be a shortest word over $\Sigma$ with a maximum number of subsequences of length $k$, among all words with universality index $\iota$. Assume $\ta\notin\al(r(w))$ is the unique missing letter in the rest.
    Then, for all $i\in[\iota]$ and the corresponding arch $\ar_i(w)$, we have that its inner part $\inner_i(w) \in \MinPerfUniv_{(\Sigma\setminus\{\m(w)[i]\}),k-\iota}$,
    and $\r(w) \in \MinPerfUniv_{(\Sigma\setminus\{\ta\}),k-\iota}$.
\end{proposition}

\ifpaper
\else
We start by showing that the word of shortest length with a minimal number of absent subsequences has a rest in which only the last letter of the modus is missing.
\fi 

\ifpaper
\else
\begin{lemma}\label{lemma:min-absent-rest-no-last-modus}
    For $\iota, k\in\N$ with $k > \iota$, let $w$ be the shortest word over $\Sigma$ which has a maximum number of subsequences of length $k$, among all words with universality index $\iota$.
    Then $\alphabet(\r(w)) = \Sigma \setminus \{\m(w)[\iota]\}$.
\end{lemma}

\begin{proof}
	By Corollary \ref{corollary:min-absent-arch-rest-puniv-iota}, $\r(w)\in\MinPerfUniv_{(\Sigma\setminus\{\ta\}),k-\iota}$
	for some $\ta\in\Sigma$, and $\inner_{\iota}(w)\in\MinPerfUniv_{\Sigma\setminus\{\m(w)[\iota]\},k-\iota}$.
	Suppose, for the sake of a contradiction, that $\ta\neq\m(w)[\iota]$.
	Consider $v = \m(w)[1:\iota-1]\ta^{k-\iota+1}$.
	Then $\vert v\vert  = k$, but $v\notin\ScatFact_k(w)$ as $\vert \arch_{\iota}(w)\r(w)\vert _{\ta} = k-\iota$, because
	$\vert \r(w)\vert _\ta = 0$ and $\m(w)[\iota] \neq \ta$. By  \Cref{corollary:min-absent-sf-main-property}, however, we have that $\m(w)\ta\in\ScatFact_k(v)$ must hold (as $w$ has a maximal number of subsequences among all words with the same universality), which is a contradiction to the choice of $v$.
	Hence, $\alphabet(\r(w)) = \Sigma\setminus\{\m(w)[\iota]\}$.\qed
\end{proof}
\fi

\ifpaper
\else
Next, we get that the modus of words of the shortest length with a minimal number of absent subsequences has to be unary.
\fi

\ifpaper
\else
\begin{lemma}\label{lemma:min-abs-shortest-word-unary-modus}
    For $\iota, k\in\N$ with $k > \iota$, let $w$ be the shortest word over $\Sigma$ which has a maximum number of subsequences of length $k$, among all words with universality index $\iota$.
    Then $\vert \alphabet(\m(w))\vert  = 1$.
\end{lemma}

\begin{proof}
	Suppose $\vert \alphabet(\m(w))\vert  > 1$.
	Let $i\in[\iota]$ be the largest number such that $\m(w)[i] \neq \m(w)[\iota]$, i.e.
	for all $j\in[i+1:\iota]$ we have $\m(w)[j] = \m(w)[\iota]$.
	Let $a=\m(w)[\iota]$. Then, $\m(w)[i] \neq \ta$.
	By \Cref{lemma:min-absent-rest-no-last-modus}, $\alphabet(\r(w)) = \Sigma\setminus\{\ta\}$.
	Consider $v\in\Sigma^*$ such that 
	\[
	v = \m(w)[1:i-1]\ta^{k-\iota+1}\m(w)[i+2:\iota]\ta.
	\]
	Thus, we get $\vert v\vert  = k$ and we have 
	$v = m(w)[1:i-1]\ta^{k-i+1}$.
	As $\inner_i(w) \in \MinPerfUniv_{(\Sigma\setminus\{\m(w)[i]\}),k-\iota}$ and $\m(w)[i]\neq \ta$, $\inner_i(w)$ contains exactly 
	$k-\iota$ occurrences of $\ta$. 
	Since $\ta\notin\alphabet(\r(w))$ we have $|\ar_{i+1}(w)\cdots\ar_{\iota}(w)\r(w)|_{\ta}=
	\iota-(i+1)+1=\iota-i$ and, therefore, there are $k-i$ occurrences of $\ta$ in the suffix of $w$ starting with the \nth{$i$} arch. This leads to $v\not\in\ScatFact_k(w)$ since $k-i<k-i+1$.
	But, $\m(w)\ta \notin \ScatFact_{\iota+1}(v)$ as $\m(w)[i] \notin \alphabet(v[i:\vert v\vert ])$, contradicting \Cref{corollary:min-absent-sf-main-property}.
	Hence, the modus of $w$ is unary.\qed
\end{proof}

\fi


\begin{theorem}\label{prop:equiv}
    For $\iota, k\in \N$, $w\in\Sigma^{\ast}$ with $k > \iota=\iota(w)$, the next are equivalent:\\
    1) $w$ is the shortest word over $\Sigma$ with a maximum number of subsequences of length $k$, among all words with universality index $\iota$.\\
    2) there exists some $\ta\in\Sigma$ such that $\m(w) = \ta^\iota$ and,
    for all $i\in[\iota]$, we have 
    $\inner_i(w),\r(w) \in \MinPerfUniv_{\Sigma',k-\iota}$ for $\Sigma' = \Sigma\setminus\{\ta\}$.
\end{theorem}
\ifpaper
\begin{proof}
	This first direction follows according to  \Cref{corollary:min-absent-arch-rest-puniv-iota}. 
	The other direction follows simply by constructing such a word and seeing that we obtain the shortest possible
	length of words that fulfil all properties. 
	\qed
\end{proof}
\else

\fi

Further, we are intested in the total number of shortest words $w$ with a minimal number of absent subsequences of length $k$ over all words that are $\iota(w)$-universal. First, consider the binary case $\Sigma=\{\ta,\tb\}$. Choose w.l.o.g. $\m(w)=\ta^{\iota}$. Thus, the inner part of each arch and the rest are unary words, i.e., $w=(\tb^{k-\iota}\ta)^{\iota}\tb^{k-\iota}$. Since the choice of $\ta$ fixes the remaining letters, for any $(k,\iota)$-combination, we have exactly two shortest words with a minimal number of absent subsequences. For alphabets with $\sigma$ letters, we have $\sigma$ choices for the modus. The inner part of each arch and the rest need to be $(k-\iota)$-universal over $\Sigma\backslash\{\ta\}$ if $\ta$ is the {\em modus-letter}. This allows us to compute the number of shortest words having a minimal number of absent subsequences, of fixed universality.

\begin{proposition}\label{prop:most-number-of-subseq}
For $k,\iota\in\N$, there
exist $\sigma((\sigma-1)!)^{(\iota+1)(k-\iota)}$ different words $w\in \Sigma^*$ of minimal length $ |w| = (\iota+1)\cdot(\sigma -1)\cdot(k-\iota) + \iota$, with universality index $\iota$, each having a maximum number $ \sum_{j \in [0, \iota(w)]} \binom{k}{j} (\sigma-1)^{k - j}$ of subsequences of length $k$. \looseness=-1
\end{proposition}
\ifpaper
\begin{proof}
	Let $\ta\in\Sigma$ with $\m(w)=\ta^{\iota}$. We have $\sigma$ choices for $\ta$. For each such choice,
	we have $\iota$-many inner parts and each inner part is the $(k-\iota)$ power of $w_{\Sigma\setminus\{a\}}$.
	Thus, we have $\iota(k-\iota)$ times the option to choose a permutation over $\Sigma\backslash\{\ta\}$. Since in fact the rest behaves like an arch in the case of the minimality of absent subsequences, we have to count one additional arch.\qed
\end{proof}
\else

\fi

This concludes all results regarding the lower bound on the number of absent subsequences among words with a common universality index.


	\section{Upper Bound for the Number of Absent Subsequences}\label{sec:upbound}
In this section, we aim to identify words, over an alphabet $\Sigma$ of size $\sigma$, with fixed universality index $\iota$, which have a maximum possible number of absent subsequences (of length $k$). We do so by constructing a $\iota$-universal word $w_{\min}$ that, in fact, has the minimum number of existing subsequences for all possible lengths $k>\iota$.
To this end, note that it is not enough to show, for some universality index $\iota$, that $w_{\min}$ has the fewest subsequences for some fixed length $k > \iota$ to get that this holds for all other lengths $k' > \iota$. One can verify that this does not hold in general. 
\ifpaper
\else
\begin{example}\label{counterexample}
    Let $w = \mathtt{(aabbccd) \cdot d}$ and $w' = \mathtt{(abcd) \cdot ccdc}$ with $\iota(w) = 1 = \iota(w')$. We observe
    \begin{itemize}
    \item $|\ScatFact_2(w)| = |\{\ta\ta, \ta\tb, \ta\tc, \ta\td, \tb\tb, \tb\tc, \tb\td, \tc\tc, \tc\td, \td\td\}| = 10$
    and
    \item $|\ScatFact_2(w')| = |\{\ta\tb, \ta\tc, \ta\td, \tb\tc, \tb\td, \tc\tc, \tc\td, \td\tc, \td\td\}| = 9$.
    \end{itemize}
    On the other hand, 
    \begin{itemize}
    \item $|\SubWords_3(w)|=|\{\ta\td\td,\ta\tc\td,\ta\tc\tc,
    \ta\tb\td,\ta\tb\tc,\ta\tb\tb,\ta\ta\td,\ta\ta\tc,\ta\ta\tb$, $\tc\td\td,\tc\tc\td,\tb\td\td$, $\tb\tc\td,\tb\tc\tc,\tb\tb\td, \tb\tb\tc\}| = 16$ and
    \item
    $|\ScatFact_3(w')| = |\{\ta\tb\tc,\ta\tb\td,\ta\tc\td,\ta\tc\tc,\ta\td\tc,\ta\td\td,\tb\tc\td,\tb\tc\tc,\tb\td\tc,\tb\td\td,$ 
    $\tc\td\tc,\tc\td\td$, $\tc\tc\td,\tc\tc\tc,\td\tc\tc,\td\tc\td,\td\td\tc\}| = 17$. 
    \end{itemize}
    Thus, $|\ScatFact_2(w)| > |\ScatFact_2(w')$ but $|\ScatFact_3(w)| < |\ScatFact_3(w')|$.
\end{example}

\fi
Hence, to show our results, we need to have a more general approach and not focus on one specific value $k$, but consider all $k\in[\iota(w)+1,|w|]$.\looseness=-1

First, we define the function $\nextAlphPos_w(i, s)$ taking, as input, some alphabet size $s \in [\sigma]$ and index $i \in [\vert w \vert]$. The function returns the leftmost position $j$ following after $i$ in $w$ such that $w[i+1:j]$ contains $s$ distinct letters or $-1$ if no such position exists, i.e., if $\alphabet(w[i + 1, \vert w \vert]) < s$. 
 \begin{definition}
     Let $w\in\Sigma^*$. We define $\nextAlphPos_w : [|w|]\times[|\Sigma|] \rightarrow [|w|]\cup\{-1\}$ by
     $$ \nextAlphPos_w(i,s) = 
     \begin{cases}  
         j   &, \text{ if } |\alphabet(w[i+1:j])| = s \text{ and } |\alphabet(w[i+1:j-1])| < s, \\
         -1  &, \text{ otherwise.}
     \end{cases}$$
 \end{definition}
 
As an important concept for this section, as well as for Section \ref{sec:enum}, we recall the {\em greedy embedding} of a subsequence $v \in \Sigma^k$ occurring in a word $w$, of length $n$. For $i\in [k]$, we define $j_i$ as the leftmost position of $w$, such that $w[1:j_i]$ contains the subsequence $v[1:i]$. Then, $w[j_1]\cdots w[j_k]=v$ and the sequence $j_1,\ldots,j_k$ is the {\em greedy embedding} of $v$ in $w$, and can be computed in linear time (see, e.g., \cite{DBLP:journals/fuin/KoscheKMS22}). Worth noting, this sequence $j_1,\ldots,j_k$ of indices satisfies $j_i \leq j_i'$, for any $i \in [1, k]$ and sequence $1 \leq j_1' < j_2' < \dots < j_k' \leq \vert w \vert$ with $w[j_1'] w[j_2'] \dots w[j_k'] = v$.


\begin{lemma}
    \label{lemma:next-function-property-scatfact-implies-sequence}
    For $w \in \Sigma^*$, $k \in [\vert w \vert]$, and $u \in \ScatFact_k(w)$, let $ j_1 < \dots < j_k $ be the greedy embedding of $u$ in $w$. Then $\nextAlphPos_w(j_i, \vert \alphabet(w[j_i + 1, j_{i + 1}]) \vert) = j_{i+1}$.
\end{lemma}
\ifpaper
\begin{proof}
	Let $u\in\ScatFact_k(w)$ and assume $e = (j_1,...,j_k)$ to be the greedy embedding of $u$ in $w$. Let $i\in[k-1]$. Assume $s = |\al(w[j_i : j_{i+1}])|$. We know that $w[j_{i+1}]\notin\al(w[j_i : j_{i+1}-1])$, as otherwise $e$ would not be a greedy embedding. Hence, $|\al(w[j_i : j_{i+1}-1])| < s$. By the definition of $\nextAlphPos_w$, we obtain $\nextAlphPos_w(j_i,|\al(w[j_i : j_{i+1}])|) = j_{i+1}$.
\end{proof}

\else

\fi

Using the properties of greedy embeddings, we give the specific construction of words with the minimum number of subsequences, for a given universality index and subsequence length. In particular, we show that all words $w\in\Sigma^*$ that satisfy the structure $\arch_1(w) = \ta_1 \cdots \ta_\sigma$ and $\arch_i(w) = \arch_{i-1}(w)^R$ have the minimum number of subsequences amongst all words with universality index $\iota(w)$ over that alphabet $\Sigma = \{ \ta_1, \ta_2, \dots , \ta_\sigma\}$. We assume, for the remainder of this section, that $w_{\min}$ has this structure, with $\arch_1(w_{\min}) = \ta_1 \cdots \ta_\sigma$, noting that the same construction can be applied for any permutation of $\Sigma$. 
\ifpaper
\else
Example \ref{example:minimal-existing-word-example} presents a visualisation of the structure for a four-letter alphabet.\looseness=-1
\begin{example}\label{example:minimal-existing-word-example}
    Let $\Sigma := \{\ta,\tb,\tc,\td\}, \iota(w) = 5$. Then
    $w =( \ta\tb\tc\td) \cdot (\td\tc\tb\ta)\cdot (\ta\tb\tc\td)\cdot (\td\tc\tb\ta)\cdot (\ta\tb\tc\td)$
    has the minimum number of subsequences of any $w' \in \Sigma^*$ with $\iota(w') =5$.
    Notice that each arch is the reversed version of its predecessor and that every arch is of minimal possible length.\looseness=-1
\end{example}
\fi
Kosche et al. in \cite{DBLP:journals/fuin/KoscheKMS22} show that $w_{\min}$ has the fewest subsequences of length $\iota(w_{\min}) + 1$ amongst all words over $\alphabet(w_{\min})$ with universality index $\iota(w_{\min})$. We generalise this by showing that $w_{\min}$ has for all $k$, the fewest subsequences of length $k$, amongst the set of words $\{v \in \Sigma^* \vert \iota(v) = \iota(w_{\min}) \}$, noting that for any $k \leq \iota(w_{\min})$, all words in this set contain every word in $\Sigma^k$ as a subsequence.
%

\begin{lemma}\label{lemma:min-existent-sf-word-size}
    Let $w\in\Sigma^*$. If $|w| > \iota(w)\cdot \sigma$, then there exists a word $w'\in\Sigma^*$ with
    $\iota(w') = \iota(w)$ and $|w'| = \iota(w)\cdot \sigma$ such that, for all $k\in[|w|]$, we have $\vert \ScatFact_k(w') \vert \leq \vert \ScatFact_k(w) \vert$.
\end{lemma}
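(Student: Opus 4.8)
The plan is to show that if $|w| > \iota(w)\sigma$, some arch (or the rest) contains a redundant letter that can be deleted without changing $\iota(w)$ but only shrinking (weakly) each set $\ScatFact_k$. First I would observe that since $w$ has $\iota = \iota(w)$ arches, each arch $\ar_i(w)$ has length $\geq \sigma$, and the rest $\r(w)$ has length $\geq 0$; the total length being strictly greater than $\iota\sigma$ forces either some arch $\ar_i(w)$ to have length $> \sigma$, or $\r(w) \neq \epsilon$. In the first case, since $\ar_i(w)$ is $1$-universal of length $> \sigma$, it contains at least one letter $\ta$ whose removal still leaves $\letters(\ar_i(w)) = \Sigma$ — concretely, take any position $j$ in $\inner_i(w)$ carrying a letter that occurs at least twice in $\ar_i(w)$; such a $j$ exists by a pigeonhole argument since $|\inner_i(w)| = |\ar_i(w)| - 1 \geq \sigma$. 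In the second case, simply take any position in $\r(w)$. Let $w'$ be $w$ with that one position deleted.

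Next I would verify that $\iota(w') = \iota(w)$: deleting a non-essential letter from an arch keeps that arch $1$-universal and keeps the arch factorisation otherwise intact, so the number of arches is unchanged; deleting a letter from the rest obviously cannot change the number of arches. (One must be slightly careful that deleting the chosen letter from $\inner_i(w)$ does not cause $\ar_i(w)$ to merge with $\ar_{i+1}(w)$ or split — but since every letter of $\Sigma$ still appears in the shortened $\ar_i(w)$ and the modus letter $\m(w)[i]$ is untouched and remains the first position completing the alphabet, the arch boundaries are preserved.) Then the key monotonicity step: $w'$ is a (contiguous-deletion, in fact scattered) factor of $w$, so $\ScatFact_k(w') \subseteq \ScatFact_k(w)$ for every $k$, giving $|\ScatFact_k(w')| \leq |\ScatFact_k(w)|$ for all $k \in [|w|]$ (and trivially for $k > |w'|$ the left side is $0$). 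Finally, iterate this deletion: each step strictly decreases $|w|$ by one while preserving $\iota$ and weakly decreasing every $|\ScatFact_k|$, so after $|w| - \iota\sigma$ steps we reach a word $w'$ with $|w'| = \iota(w)\sigma$, $\iota(w') = \iota(w)$, and $|\ScatFact_k(w')| \leq |\ScatFact_k(w)|$ for all $k \in [|w|]$, as claimed.

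The main obstacle is the bookkeeping in the induction step showing that the arch factorisation is genuinely preserved under deleting one non-essential letter from the inner of an arch — one has to argue that the greedy left-to-right arch boundaries do not shift in a way that changes the arch count, which requires knowing precisely that the deleted letter is not the first occurrence (within its arch) of any symbol and in particular is not any modus letter. Everything else (the pigeonhole producing a deletable letter, and the $\ScatFact_k$ monotonicity under taking factors) is routine.
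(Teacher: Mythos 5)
Your proposal is correct and follows essentially the same route as the paper's proof: delete a letter whose removal preserves the universality index, use the fact that deletion can only shrink every $\ScatFact_k$ set, and iterate down to length $\iota(w)\cdot\sigma$. In fact you spell out the details the paper leaves implicit (the case split between an over-long arch and a nonempty rest, the pigeonhole producing a repeated letter in an inner, the preservation of arch boundaries, and the explicit iteration), so nothing is missing.
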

\ifpaper
\begin{proof}
	Let $w$ satisfy $\vert w \vert > \iota(w) \cdot \sigma$. Now, there exists some index $i \in [\vert w \vert]$ such that $\iota(w[1:i - 1] w[i + 1:\vert w \vert]) = \iota(w)$. Further, given $u \in \ScatFact(w[1:i - 1] w[i + 1: \vert w \vert])$, $u \in \ScatFact(w)$ by construction. Thus $\ScatFact(w[1:i - 1] w[i + 1:\vert w \vert]) \subseteq \ScatFact(w)$, and by extension $\vert \ScatFact_k(w[1:i - 1] w[i + 1:\vert w \vert]) \vert \leq \vert \ScatFact_k(w) \vert$ for all $k\in[|w|]$, completing the proof.\qed
\end{proof}
\else

\fi

Lemma~\ref{lemma:min-existent-sf-word-size} allows us to only consider words $w\in\Sigma^*$ with $|w| = \iota(w)\cdot \sigma$, which have the property that every arch is a permutation of the alphabet $\Sigma$.
The next two lemmata give upper bounds on possible values of $\nextAlphPos_w$ when $|w| = \iota(w)\cdot\sigma$. Further, we show that the word $w_{\min}$ matches these bounds.\looseness=-1

\begin{lemma}\label{lemma:shortest-w-next-function-upper-bound}
    Let $w\in\Sigma^{\iota(w) \sigma}$. Then, for all $s \in [\sigma]$ and $j\in[|w|]$,
    $ \nextAlphPos_w(j,s) \leq \left\lfloor\frac{j}{\sigma}\right\rfloor\cdot\sigma + \sigma + s. $
\end{lemma}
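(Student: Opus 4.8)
The plan is to first convert the length hypothesis into a rigid block structure, and then bound $\nextAlphPos_w(j,s)$ by a short case analysis on how far past position $j$ one must read in order to collect $s$ distinct letters.

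First I would observe that $|w| = \iota(w)\cdot\sigma$ leaves no slack in the arch factorisation: since $w$ has exactly $\iota(w)$ arches, each of length at least $\sigma$, and a rest of length at least $0$, the equality $\sum_{i=1}^{\iota(w)}|\ar_i(w)| + |\r(w)| = |w| = \iota(w)\cdot\sigma$ forces $|\r(w)| = 0$ and $|\ar_i(w)| = \sigma$ for all $i \in [\iota(w)]$. A factor of length $\sigma$ containing all $\sigma$ letters of $\Sigma$ is a permutation of $\Sigma$, so $w = B_1 \cdots B_{\iota(w)}$ with $B_i := w[(i-1)\sigma+1..i\sigma]$ a permutation of $\Sigma$. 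This yields the two facts I will use: any contiguous suffix $w[x+1..i\sigma]$ of a single block has exactly $i\sigma - x$ distinct letters, and each block $B_i$ contains every letter of $\Sigma$ exactly once.

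Next I would fix $j$ and $s$, set $b := \lfloor j/\sigma\rfloor$ (so $b\sigma \le j < (b+1)\sigma$ and the target bound is $(b+1)\sigma + s$), and dispose of the trivial case: if $\nextAlphPos_w(j,s) = -1$ we are done, since $(b+1)\sigma + s \ge 1 > -1$. Otherwise $|\alphabet(w[j+1..|w|])| \ge s$, so $j < |w|$, $b \le \iota(w)-1$, and block $B_{b+1}$ exists; let $t := (b+1)\sigma - j \ge 1$ be the number of positions of $B_{b+1}$ strictly after $j$, which by the first fact contributes exactly $t$ distinct letters read one new letter at a time. If $t \ge s$, then reading from position $j+1$ the distinct-letter count reaches $s$ already at position $j+s \le (b+1)\sigma + s$. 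If $t < s$, then $B_{b+1}$ cannot be the last block (it contributes only $t < s$ distinct letters while at least $s$ letters remain after $j$), so $B_{b+2}$ exists and is a permutation of $\Sigma$; among its first $s$ positions at most $t$ repeat already-seen letters, so the count climbs from $t$ to at least $s$ within positions $(b+1)\sigma+1,\dots,(b+1)\sigma+s$, whence $\nextAlphPos_w(j,s) \le (b+1)\sigma+s = \lfloor j/\sigma\rfloor\sigma + \sigma + s$. I would close by noting that $(b+1)\sigma+s \le (b+2)\sigma \le \iota(w)\sigma = |w|$ in the second case, so the returned value is a genuine position.

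I expect the main obstacle to be nothing more than careful boundary bookkeeping — in particular the case where $j$ is a multiple of $\sigma$ (so $j+1$ begins a fresh block and $t = \sigma$), and ensuring that whenever the letters needed to reach a count of $s$ would lie past position $|w|$ the function genuinely returns $-1$ and the inequality holds vacuously. Both are handled uniformly by the observations $t \ge 1$ (since $j < (b+1)\sigma$) and $(b+1)\sigma + s \le (b+2)\sigma \le |w|$ exactly in the regime where block $B_{b+2}$ is required.
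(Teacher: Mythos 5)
Your proof is correct and follows essentially the same route as the paper's: both exploit that $|w|=\iota(w)\cdot\sigma$ forces every arch to be a permutation of $\Sigma$ with empty rest, and then split into cases according to whether the $s$ distinct letters can be collected within the remainder of the current arch or only by reading into the next one (your $t\geq s$ versus $t<s$ mirrors the paper's $s+r<\sigma$ versus $s+r\geq\sigma$). Your handling of the boundary cases (dispatching the $-1$ case up front and making the permutation structure of the arches explicit) is if anything slightly more careful than the paper's, but it is the same argument.
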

\ifpaper
\else
\begin{proof}
	Let $w\in\Sigma^{k\cdot\iota}$ be a $\iota$-universal word of length $k\cdot\iota$. We now show a general upper bound for $\nextAlphPos_w(j,s)$, for some position $j\in[|w|]$ and a number of distinct letters $s\in[|\sigma|]$. This means that we want to show the upper bound of the first position of $w$ where $s$ distinct letters occur after position $j$ in $w$. Notice that, due to the length and universality of $w$, each letter occurs in each arch exactly once. Also notice that $r(w)=\varepsilon$. Hence, using the position $j$ and the size of the alphabet $\sigma$, we can exactly define the position $r$ inside the arch in which $w[j]$ is located.
	In particular, let $r = j\bmod\sigma$ if $j\bmod\sigma\neq0$, otherwise let $r = \sigma$.
	Note, first, that if $s \geq \sigma - r$, then we have to jump to the next arch to read $s$ new letters after position $j$. Assume $w[j]$ is in the $i^{\mathtt{th}}$ arch of $w$.
	In the worst case, $\al(\ar_i(w)[r+1 : \sigma]) = \al(\ar_{i+1}(w)[1 : \sigma-r])$, i.e., the letters occurring in the suffix of $\ar_i(w)$ occur as prefix letters in $\ar_{i+1}(w)$. Then, we can assume that $\nextAlphPos_w(j,s)$ returns the position after which $s$ distinct letters have been read in $\ar_{i+1}(w)$. Due to the characterization of $w$, we know that this is position $s$ of $\ar_{i+1}(w)$, thus we have
	$$ \nextAlphPos_w(j,s) \leq \left\lfloor\frac{j}{\sigma}\right\rfloor\cdot\sigma + \sigma + s, $$
	where $\left\lfloor\frac{j}{\sigma}\right\rfloor\cdot\sigma + \sigma$ refers to the starting position of $\ar_{i+1}(w)$.

	Now, assume $s < \sigma-r$. Then, we have $|\ar_i(w)[r+1 : \sigma]| > s$. We know that all letters occur exactly once in each arch. Hence, we know that the factor $\ar_i(w)[r+1 : r+s]$ indeed exists and has an alphabet size of $|\al(\ar_i(w)[r+1 : r+s])| = s$. Hence, $\nextAlphPos_w(j,s) = j+s \leq \left\lfloor\frac{j}{\sigma}\right\rfloor\cdot\sigma + \sigma + s$.

	Finally, if $j > |w| - \sigma$, and thus $j$ is in the last arch of $w$, either $r \leq \sigma - s$, and thus $\vert \alphabet(w[j + 1:j + s]) \vert = s$, so the previous case holds, or $r > \sigma - s$, in which case $\vert \alphabet(w[j + 1:\vert w \vert]) \vert < s$ and thus $\nextAlphPos_w(j,s) = -1$, satisfying the statement.\qed
	
	
\end{proof}
\fi


\begin{lemma}\label{lemma:shortest-w-next-function-upper-bound-same-arch}
    Let $w\in\Sigma^{\iota(w) \sigma}$.
    For $s\in[\sigma]$ and $j\in[|w|]$, if either $j \bmod \sigma = 0$ or $j \bmod \sigma \neq 0$ and $s \leq \sigma-(j\bmod\sigma)$, then
    $ \nextAlphPos_w(j,s) = j+s$ or $ \nextAlphPos_w(j,s) = -1$.\looseness=-1
\end{lemma}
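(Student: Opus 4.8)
The plan is to exploit the fact that the length constraint $|w| = \iota(w)\cdot\sigma$ rigidly pins down the shape of $w$. Writing the arch factorisation $w = \ar_1(w)\cdots\ar_{\iota(w)}(w)\,\rest(w)$ and recalling that $\iota(\ar_i(w)) = 1$ forces $\letters(\ar_i(w)) = \Sigma$ and hence $|\ar_i(w)| \ge \sigma$, the hypothesis $|w| = \iota(w)\sigma$ leaves no slack: every arch has length exactly $\sigma$ and $\rest(w) = \epsilon$. Since a word of length $\sigma$ over $\Sigma$ that contains all of $\Sigma$ must list each letter exactly once, $w$ decomposes into $\iota(w)$ consecutive \emph{permutation blocks}, the $i$-th being $w[(i-1)\sigma+1 .. i\sigma]$; consequently any contiguous factor of $w$ contained in a single such block has pairwise distinct letters, so its alphabet has size equal to its length.

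From there I would simply unfold the definition of $\nextAlphPos_w$ in the two cases of the hypothesis. Suppose first $r := j \bmod \sigma \in [\sigma-1]$ and $s \le \sigma - r$; then position $j$ is the $r$-th letter of the block $\lceil j/\sigma\rceil$, and all of $w[j+1], \dots, w[j+s]$ still lie in that block because $s \le \sigma - r$. Hence $|\letters(w[j+1..j+s])| = s$ while $|\letters(w[j+1..j+s-1])| = s-1$, which is exactly the defining condition for $\nextAlphPos_w(j,s) = j+s$; moreover $j + s \le (\lceil j/\sigma\rceil - 1)\sigma + r + (\sigma - r) = \lceil j/\sigma\rceil\,\sigma \le \iota(w)\sigma = |w|$, so this index is valid. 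Suppose instead $j \bmod\sigma = 0$, say $j = m\sigma$ with $m < \iota(w)$; then $w[j+1..j+\sigma]$ is precisely block $m+1$, again a permutation word, and the same counting applied to its length-$s$ prefix (for any $s \in [\sigma]$) gives $\nextAlphPos_w(j,s) = j+s$, with $j+s \le j+\sigma = (m+1)\sigma \le |w|$.

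I do not anticipate a real difficulty; the argument is essentially a restatement of the permutation-block structure against the definition of $\nextAlphPos_w$. The one point needing a word of care is the degenerate boundary $j = |w|$, which satisfies $j \bmod \sigma = 0$: there $\nextAlphPos_w(j,s) = -1$ by definition while the formula $j+s$ overshoots $|w|$, so the statement is to be read — and is only ever applied — in the regime $j + s \le |w|$, equivalently $j < |w|$ when $j \bmod \sigma = 0$, i.e.\ precisely where $\nextAlphPos_w(j,s)$ is not the sentinel value $-1$; I would state this restriction explicitly rather than leave it implicit.
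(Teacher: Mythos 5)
Your proof is correct and follows essentially the same route as the paper's: the length constraint forces every arch to have length exactly $\sigma$ (a permutation of $\Sigma$), so a length-$s$ factor lying inside a single arch has exactly $s$ distinct letters, which is precisely the defining condition for $\nextAlphPos_w(j,s)=j+s$. You are in fact somewhat more careful than the paper, spelling out why the arches are permutation blocks, verifying both clauses of the definition, and rightly flagging the boundary case $j=|w|$ (where $\nextAlphPos_w$ returns $-1$), which the paper's statement and proof pass over silently.
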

\ifpaper
\else
\begin{proof}
	This result follows analogously to the second and third cases of the proof of Lemma~\ref{lemma:shortest-w-next-function-upper-bound}. In particular, if $|w| - j < \sigma$ and $j \bmod \sigma = 0$, then $j = |w|$, so $\nextAlphPos_w(j,s) = -1$ for all $s\in[\sigma]$. 
	Otherwise, assume first that $j \bmod \sigma = 0$. Then $w[j + 1: j + s]$ is the prefix of an arch of length $\sigma$ containing $s$ unique symbols, and thus $\vert \alphabet(w[j + 1 : s]) \vert = s$. Otherwise, $w[j + 1: j + s]$ is a factor of some arch in $w$, and hence must contain $s$ unique symbols, giving the statement.\qed
\end{proof}
\fi

\begin{lemma}\label{lemma:next-function-min-word-properties}
    For the word $w_{\min}$, $s\in[\sigma]$, $j\in[|w|]$, and $r = j\bmod\sigma$, we have
    $$ \nextAlphPos_{w_{\min}}(j, s) = 
    \begin{cases}  
        j+s     & \text{, if } r = 0\text{ and } |w|-j \geq \sigma \text{, or } s \leq \sigma-r, \\
        \left\lfloor\frac{j}{\sigma}\right\rfloor\cdot\sigma + \sigma + s    & \text{, if } r \neq 0, \left\lfloor\frac{j}{\sigma}\right\rfloor+1 < \iota(w_{\min}), \text{ and } s > \sigma-r,\\
        -1 & \text{, otherwise.}
    \end{cases}$$
\end{lemma}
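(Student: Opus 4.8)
The plan is to analyze the three cases according to the position $j$ within its arch and the size $s$ of the alphabet window we want to fill. Recall that $w_{\min}$ consists of $\iota(w_{\min})$ arches, each a permutation of $\Sigma$, where consecutive arches are mutual reverses; in particular every arch is a block of $\sigma$ distinct letters. Write $r = j \bmod \sigma$ (so position $j$ is the $r$-th letter of its arch when $r\neq 0$, and the last letter of its arch when $r = 0$), and let the arch containing $j$ be the $a$-th arch, so $a = \lfloor j/\sigma\rfloor + 1$ when $r\neq 0$ and $a = j/\sigma$ when $r = 0$.

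First I would dispatch the case $r = 0$ or $s \leq \sigma - r$. Here the window $w_{\min}[j+1..j+s]$ lies entirely within a single arch: if $r=0$ it is the length-$s$ prefix of the $(a+1)$-th arch, and if $0 < s \le \sigma - r$ it is a length-$s$ factor of the $a$-th arch starting at position $r+1$. In both situations the $s$ letters are distinct (being part of a permutation of $\Sigma$), so $\lvert\alphabet(w_{\min}[j+1..j+s])\rvert = s$ while the window of length $s-1$ has only $s-1$ distinct letters; hence $\nextAlphPos_{w_{\min}}(j,s) = j+s$. This is exactly the conclusion of \Cref{lemma:shortest-w-next-function-upper-bound-same-arch}, so in fact I can simply invoke that lemma for this case.

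Next, for $r \neq 0$ and $s > \sigma - r$, the window cannot be completed inside the $a$-th arch: the remaining suffix of that arch has only $\sigma - r < s$ letters. I would argue that the relevant window is $w_{\min}[j+1..\,a\sigma + s] = w_{\min}[j+1..(\lfloor j/\sigma\rfloor)\sigma + \sigma + s]$, provided position $a\sigma + s$ exists, i.e.\ provided the $(a+1)$-th arch is present, which is exactly the condition $a + 1 = \lfloor j/\sigma\rfloor + 2 \le \iota(w_{\min})$, equivalently $\lfloor j/\sigma\rfloor + 1 < \iota(w_{\min})$. The key combinatorial point — and the main obstacle — is to show that this window already contains $\sigma$ (equivalently $s$) distinct letters, and that no shorter window does. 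For the upper bound, note the window contains the entire suffix of arch $a$ of length $\sigma - r$ \emph{together with} the first $s$ letters of arch $a+1$; since arch $a+1$ is the reverse of arch $a$, its first $s$ letters are the \emph{last} $s$ letters of arch $a$ in reverse order, and as $s > \sigma - r$ this set of $s$ letters of arch $a$ already includes all of the last $\sigma - r$ letters of arch $a$ plus $s - (\sigma - r)$ more. Combining, the window sees all $\sigma$ letters of arch $a$, so $\lvert\alphabet\rvert = \sigma = s$. (When $s < \sigma$ one checks similarly that exactly $s$ distinct letters appear by position $a\sigma+s$ and not before; the reversal structure is what forces the count to climb to exactly $s$ at that position rather than earlier, because the overlap of the two oppositely-ordered blocks is exactly the last $\sigma-r$ letters of arch $a$.) For the lower bound (no shorter position works), observe that any window ending strictly before $a\sigma + s$ is contained in (suffix of arch $a$) followed by (a prefix of arch $a+1$ of length $< s$); by the reversal, that prefix consists of the last few letters of arch $a$, so the union of letters is the last $\max(\sigma - r, s')$ letters of arch $a$ for some $s' < s$, which has size $\max(\sigma-r, s') < s$ (using both $\sigma - r < s$ and $s' < s$). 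Hence the first position at which the count reaches $s$ is precisely $a\sigma + s = \lfloor j/\sigma\rfloor\sigma + \sigma + s$, matching the upper bound of \Cref{lemma:shortest-w-next-function-upper-bound}.

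Finally, the "otherwise" case is $r \neq 0$, $s > \sigma - r$, and $\lfloor j/\sigma\rfloor + 1 = \iota(w_{\min})$, i.e.\ position $j$ is in the last arch. Then $w_{\min}[j+1..\lvert w_{\min}\rvert]$ is just the suffix of the last arch, which has exactly $\sigma - r < s$ distinct letters, so no position achieves $s$ distinct letters and $\nextAlphPos_{w_{\min}}(j,s) = -1$ by definition. (The sub-case $\lfloor j/\sigma\rfloor + 1 > \iota(w_{\min})$ cannot occur since $j \le \lvert w_{\min}\rvert = \iota(w_{\min})\sigma$.) Assembling the three cases gives the stated formula. I expect the reversal-overlap counting in the middle case to be the only genuinely delicate step; everything else is a direct consequence of the arch-permutation structure and the two preceding lemmata.
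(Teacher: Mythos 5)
Your proof is correct and takes essentially the same route as the paper: case (1) by invoking \Cref{lemma:shortest-w-next-function-upper-bound-same-arch}, case (2) via the reversal identity $\alphabet(\arch_i(w_{\min})[\sigma-x+1..\sigma]) = \alphabet(\arch_{i+1}(w_{\min})[1..x])$ to pin down the first position reaching $s$ distinct letters as $\lfloor j/\sigma\rfloor\sigma+\sigma+s$, and case (3) because the remaining suffix of the last arch has fewer than $s$ distinct letters. The only blemish is the sentence claiming the window ``sees all $\sigma$ letters of arch $a$, so $|\alphabet|=\sigma=s$'' (true only when $s=\sigma$), but your parenthetical and the $\max(\sigma-r,s')<s$ count for shorter windows already supply the correct argument for general $s$, so nothing essential is missing.
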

\ifpaper
\else
\begin{proof}
	Case (1) follows directly from Lemma \ref{lemma:shortest-w-next-function-upper-bound-same-arch}.
	Case (2) can be derived from the same arguments as in Lemma \ref{lemma:shortest-w-next-function-upper-bound},  noting that we have $\alphabet(\arch_i(w_{\min})[\sigma - x + 1: \sigma]) = \alphabet(\arch_{i + 1}(w_{\min})[1 : x])$, for every $x \in [\sigma], i \in [\iota(w_{\min})-1]$. As $s > \sigma-r$, the suffix of the arch $\arch_{x}(w_{\min})$ in which $j$ is located, starting at $j+1$ and ending with the end of that arch, has length less than $s$. Hence, the shortest factor of $w_{\min}$ starting at $j + 1$ containing $s$ symbols has the prefix of length $s$ of arch $\arch_{x + 1}(w_{\min})$ as its suffix, ending at position $\left\lfloor\frac{j}{\sigma}\right\rfloor\cdot\sigma + \sigma + s$. Finally, in case (3), $j + s > \iota(w_{\min}) \sigma$, and thus $\vert \alphabet(w_{\min}[j + 1 : \vert w_{\min} \vert) \vert < s$. By extension, $\nextAlphPos_{w_{\min}}(j, s) = -1$, completing the proof.\qed
\end{proof}
\fi

We can now construct an injective mapping of subsequences from words structured like $w_{\min}$ to subsequences of any other word $w'\in\Sigma^{\iota(w_{min})\sigma}$ with the same universality index and the same size. The function is defined by mapping a subsequence $u\in\ScatFact(w_{min})$, with the greedy embedding $(j_1,...,j_{k})$, to the subsequence $v\in\ScatFact(w')$, with the greedy embedding $(j_1',...,j_{k}')$, that can be obtained by continued application of the $\nextAlphPos$ function using the same sequence of numbers $(s_1,...,s_n)\in[\sigma]^k$, starting in both cases in position $0$. In particular, if we have $\nextAlphPos_{w_{min}}(0,s_1) = j_1$, $\nextAlphPos_{w_{min}}(j_i,s_{i+1}) = j_{i+1}$, then we define $(j_1',...,j_{k}')$ such that $\nextAlphPos_{w'}(0,s_1) = j_1'$, and $\nextAlphPos_{w'}(j'_i,s_{i+1}) = j_{i+1}'$, for $i\in[k-1]$. Following the previous results, this mapping is total and injective.



\begin{proposition}\label{proposition:min-sf-word-injective-function}
    For $w_{\min}$ and each $w'\in\Sigma^{*}$ with
    $\iota(w_{\min}) = \iota(w')$ and $|w_{\min}| = |w'|$, there exists an injective function
    $f : \ScatFact_k(w_{\min}) \rightarrow \ScatFact_k(w')$.
\end{proposition}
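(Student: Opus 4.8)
The plan is to use the greedy (leftmost) embeddings supplied by \Cref{lemma:next-function-property-scatfact-implies-sequence}, transport the per-step ``fresh-letter counts'' from $w_{\min}$ to $w'$, and argue that this encoding is recoverable from the image. First, set $\iota := \iota(w_{\min})$. Since $w_{\min}$ consists of $\iota$ arches of length $\sigma$ each, $\vert w' \vert = \vert w_{\min} \vert = \iota\sigma$, and every arch has length at least $\sigma$, the word $w'$ also consists of exactly $\iota$ arches, each of length $\sigma$, with empty rest; in particular each window $w'[(i-1)\sigma+1..i\sigma]$, $i\in[\iota]$, is a permutation of $\Sigma$. Hence \Cref{lemma:shortest-w-next-function-upper-bound} and \Cref{lemma:shortest-w-next-function-upper-bound-same-arch} apply to $w'$, while \Cref{lemma:next-function-min-word-properties} gives the exact values of $\nextAlphPos_{w_{\min}}$.

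The key comparison I would establish is: for all $j\in[\iota\sigma]$ and $s\in[\sigma]$ with $\nextAlphPos_{w_{\min}}(j,s)\neq-1$, we have $\nextAlphPos_{w'}(j,s)\neq-1$ and $\nextAlphPos_{w'}(j,s)\leq\nextAlphPos_{w_{\min}}(j,s)$. This is a short case analysis on \Cref{lemma:next-function-min-word-properties}: in case~(1) both sides equal $j+s$ by \Cref{lemma:shortest-w-next-function-upper-bound-same-arch}; in case~(2), \Cref{lemma:shortest-w-next-function-upper-bound} gives $\nextAlphPos_{w'}(j,s)\leq\lfloor j/\sigma\rfloor\sigma+\sigma+s=\nextAlphPos_{w_{\min}}(j,s)$, and this value is not $-1$ because $w'[j+1..\iota\sigma]$ still contains a full arch, hence all of $\Sigma$; case~(3) is excluded by hypothesis. (I would also note the immediate fact that $\nextAlphPos_{w'}$ is non-decreasing in its first argument, directly from the definition.)

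Now I would construct $f$. Given $u\in\ScatFact_k(w_{\min})$, let $j_1<\dots<j_k$ be the index sequence of \Cref{lemma:next-function-property-scatfact-implies-sequence} for $u$ in $w_{\min}$, and put $s_i:=\vert\alphabet(w_{\min}[j_i+1..j_{i+1}])\vert\in[\sigma]$ for $i\in[k-1]$, so that $j_{i+1}=\nextAlphPos_{w_{\min}}(j_i,s_i)$. Define $j_1':=j_1$ and $j_{i+1}':=\nextAlphPos_{w'}(j_i',s_i)$. An induction using the comparison above and monotonicity of $\nextAlphPos_{w'}$ shows $j_i'\leq j_i$ and $j_i'\neq-1$ for all $i$; since each step increases the position by at least one, $1\leq j_1'<\dots<j_k'\leq\iota\sigma$, and so $f(u):=w'[j_1']\cdots w'[j_k']\in\ScatFact_k(w')$ is well defined.

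It remains to prove $f$ injective, and this is the point that requires care in choosing the construction. By definition of $\nextAlphPos_{w'}$, at each $j_{i+1}'$ the letter $w'[j_{i+1}']$ is the $s_i$-th distinct letter of $w'[j_i'+1..j_{i+1}']$, so it does not occur in $w'[j_i'+1..j_{i+1}'-1]$; together with $j_1'\leq\sigma$ (so $w'[j_1']$ does not occur in $w'[1..j_1'-1]$), the sequence $(j_1',\dots,j_k')$ satisfies all three conditions of \Cref{lemma:next-function-property-scatfact-implies-sequence} for $f(u)$, hence it is the (unique, coordinatewise-minimal) greedy embedding of $f(u)$ in $w'$ and is determined by $f(u)$ alone. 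From it we recover $j_1=j_1'$ and $s_i=\vert\alphabet(w'[j_i'+1..j_{i+1}'])\vert$, and from $(j_1,s_1,\dots,s_{k-1})$ we recover the greedy embedding $(j_1,\dots,j_k)$ of $u$ in $w_{\min}$ via $j_{i+1}=\nextAlphPos_{w_{\min}}(j_i,s_i)$, hence $u=w_{\min}[j_1]\cdots w_{\min}[j_k]$. So $f(u)$ determines $u$, and $f$ is injective. \textbf{Main obstacle.} The substantive content, that $w_{\min}$ is pointwise the slowest word for $\nextAlphPos$, is already packaged by Lemmas~\ref{lemma:shortest-w-next-function-upper-bound}, \ref{lemma:shortest-w-next-function-upper-bound-same-arch}, \ref{lemma:next-function-min-word-properties}; the delicate part is recognising that the fresh-letter counts $s_i$ are simultaneously the right quantity to transport (keeping the embedding valid, by the comparison) and reconstructible from the image (yielding injectivity), with the $-1$ and last-arch edge cases needing routine bookkeeping.
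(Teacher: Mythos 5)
Your proposal is correct and takes essentially the same approach as the paper: the identical map $f$, defined by transporting the fresh-letter counts $s_i$ of the leftmost embedding through $\nextAlphPos$, with well-definedness resting on the same three lemmas about $\nextAlphPos_{w'}$ versus $\nextAlphPos_{w_{\min}}$. The only deviation is in finishing injectivity — the paper argues that $f(u_1)$ and $f(u_2)$ differ at the first position where $u_1,u_2$ differ, while you show $u$ is reconstructible from $f(u)$ because the transported index sequence is itself the greedy embedding of the image — an equivalent argument, and your version additionally spells out the $j_i'\leq j_i$ induction and the $-1$/last-arch bookkeeping that the paper leaves implicit.
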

\ifpaper
\begin{proof}
	Consider $w_{\min}$ as defined before and let $w'\in\Sigma^*$ with $\iota(w_{\min}) = \iota(w')$
	and $|w_{\min}| = |w'|$. Let $k\in[|w_{\min}|]$.
	Let $f : \ScatFact_k(w_{\min}) \rightarrow \ScatFact_k(w')$ and define $f(u) = v$
	if there exist $j_1,\ldots,j_k\in[|w_{\min}|]$, $j_1',\ldots,j_k'\in[|w'|]$, and $s_1,\ldots,s_{k-1}\in[\sigma]$ such that $u = w_{\min}[j_1]\cdots w_{\min}[j_k], v = w'[j_1'] \cdots w'[j_k'], j_1 = j_1',|w_{\min}[1:j_1-1]|_{w_{\min}[j_1]} = 0, \nextAlphPos_w(j_i,s_i) = j_{i+1}$, and $\nextAlphPos_w(j_i',s_i) = j_{i+1}'$.
	We now show that $f$ is an injective function.
	Let $u\in\ScatFact_k(w_{\min})$.
	By Lemma \ref{lemma:next-function-property-scatfact-implies-sequence} we know that there exist $j_1,\ldots,j_k\in[|w_{\min}|]$ and
	$s_1,\ldots,s_{k-1}\in[\sigma]$ such that $|w_{\min}[1:j_1-1]|_{w_{\min}[j_1]} = 0$ and $\nextAlphPos_{w_{\min}}(j_i,s_i) = j_{i+1}$
	for all $i\in[k-1]$.
	By Lemmas~\ref{lemma:shortest-w-next-function-upper-bound}, ~\ref{lemma:shortest-w-next-function-upper-bound-same-arch}
	and ~\ref{lemma:next-function-min-word-properties} we know that for all $i\in[k-1]$, $\nextAlphPos_{w_{\min}}(j_i,s_i)$
	witnesses its upper bound over all words $w'\in\Sigma^{\vert w_{\min} \vert}$ with $\iota(w_{\min}) = \iota(w')$. Therefore, $\nextAlphPos_{w_{\min}}(j_i,s_i) \geq \nextAlphPos_{w'}(j_i',s_i)$, for every $i \in [k]$, and thus the word constructed by $f$ must be a subsequence of $w'$. To show that $f(u_1) \neq f(u_2)$, for any pair $u_1, u_2 \in \ScatFact_k(w_{\min})$ where $u_1 \neq u_2$, let $i$ be the index such that $u_1[1: i - 1] = u_2[1 : i - 1]$, and $u_1[i] \neq u_2[i]$. Further, let $j_1, \dots, j_k$ be the leftmost set of indices such that $u_1 = w_{\min}[j_1] \dots w_{\min}[j_k]$, and $\ell_1, \dots, \ell_k$ be the leftmost set of indices such that $u_2 = w_{\min}[\ell_1] \dots w_{\min}[\ell_k]$. First, as $u_1[1:i - 1] = u_2[1:i - 1]$, we have by definition $j_1, \dots, j_{i - 1} = \ell_1, \dots, \ell_{i - 1}$. Further, as $u_1[i] \neq u_2[i]$, $j_i \neq \ell_i$, and thus $\vert \alphabet(w_{\min}[j_{i - 1} + 1 : j_{i}]) \vert \neq \vert \alphabet(w_{\min}[\ell_{i - 1} + 1 : \ell_i]) \vert$. Now, let $j_1', \dots, j_k'$ be the leftmost set of indices embedding $f(u_1)$ into $w'$, and let $\ell_1', \dots, \ell_k'$ be the leftmost set of indices embedding $f(u_2)$ into $w'$. Again, we note that $j_1', \dots, j_{i - 1}' = \ell_1', \dots, \ell_{i - 1}'$. Now, we have $j_i' = \nextAlphPos(j_{i - 1}', \vert \alphabet(w_{\min}[j_{i - 1} + 1 : j_{i}]) \vert)$ and $\ell_i' =  \nextAlphPos_{w_{\min}}(j_{i - 1}', \vert \alphabet(w_{\min}[\ell_{i - 1} + 1 : \ell_i]) \vert)$. As $\vert \alphabet(w_{\min}[j_{i - 1} + 1 : j_{i}]) \vert \neq \vert \alphabet(w_{\min}[\ell_{i - 1} + 1 : \ell_i]) \vert$, $w'[j'_i] \neq w'[\ell_i']$, and thus $f(u_1)[i] \neq f(u_2)[i]$. Due to the shape of $w'$, we have $f(u_1) \neq f(u_2)$.\qed
\end{proof}
\else

\fi

Using this construction, we can conclude the main result of this section.

\begin{theorem}\label{theorem:min-sf-word}
    Let $\Sigma := \{\ta_1,\ldots,\ta_\sigma\}$ for some $\sigma\in\N$.
    Let $w\in\Sigma^*$ such that w.l.o.g. $\arch_1(w) = \ta_1\cdots\ta_\sigma$ and $\arch_i(w) = \arch_{i-1}(w)^R$ for all $i\in[2,\iota(w)]$. Then, for all $w'\in\Sigma^*$ with $\iota(w) = \iota(w')$, we have $|\ScatFact_k(w)| \leq |\ScatFact_k(w')|$.
\end{theorem}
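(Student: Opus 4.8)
The plan is to combine the two main tools already assembled: Lemma~\ref{lemma:min-existent-sf-word-size}, which lets us restrict attention to words of length exactly $\iota\cdot\sigma$, and Proposition~\ref{proposition:min-sf-word-injective-function}, which delivers the inequality against any equal-length competitor. First I would record the elementary facts about the word $w$ in the statement: since each $\arch_i(w)$ is a permutation of $\Sigma$ it is a genuine arch (it is $1$-universal and its last letter does not occur earlier within it), the rest is empty, hence $\iota(w)$ equals the number of arches and $|w| = \iota(w)\cdot\sigma$. In particular $w$ is precisely a word $w_{\min}$ in the notation fixed before Proposition~\ref{proposition:min-sf-word-injective-function}. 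We may assume $\iota(w)\geq 1$; for $\iota(w)=0$ the claimed form forces $w=\epsilon$ and the statement is immediate.

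Next I would dispose of the trivial ranges of $k$. For $k=0$ both sides equal $1$, and for $k > |w| = \iota(w)\cdot\sigma$ we have $\ScatFact_k(w) = \emptyset$, so the inequality holds. It therefore suffices to prove $|\ScatFact_k(w)| \leq |\ScatFact_k(w')|$ for every $k \in [\iota(w)\cdot\sigma]$ and every $w' \in \Sigma^*$ with $\iota(w') = \iota(w)$.

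Fix such a $w'$. Every arch of $w'$ has length at least $\sigma$, so $|w'| \geq \iota(w')\cdot\sigma = |w|$. If $|w'| = |w|$, then Proposition~\ref{proposition:min-sf-word-injective-function} provides an injection $\ScatFact_k(w) \hookrightarrow \ScatFact_k(w')$, whence $|\ScatFact_k(w)| \leq |\ScatFact_k(w')|$. If instead $|w'| > \iota(w')\cdot\sigma$, apply Lemma~\ref{lemma:min-existent-sf-word-size} to $w'$ to obtain a word $w''$ with $\iota(w'') = \iota(w') = \iota(w)$, $|w''| = \iota(w)\cdot\sigma = |w|$, and $|\ScatFact_k(w'')| \leq |\ScatFact_k(w')|$ for all $k \in [|w'|]$; since $[\iota(w)\cdot\sigma] \subseteq [|w'|]$, this bound is available for every $k$ under consideration. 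Applying Proposition~\ref{proposition:min-sf-word-injective-function} to the equal-length pair $w$ and $w''$ and chaining gives $|\ScatFact_k(w)| \leq |\ScatFact_k(w'')| \leq |\ScatFact_k(w')|$, completing the argument.

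Since the heavy lifting---both the length reduction and the injective embedding of scattered factors---is already packaged in the two cited results, I do not expect a genuine obstacle; the only points needing a little care are the bookkeeping of the admissible values of $k$ and the observation $|w'|\geq\iota(w')\cdot\sigma$, which together legitimise the reduction to the equal-length case.
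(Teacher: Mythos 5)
Your proposal is correct and follows essentially the same route as the paper's proof: reduce any competitor $w'$ to an equal-length word $w''$ via Lemma~\ref{lemma:min-existent-sf-word-size}, then apply the injection from Proposition~\ref{proposition:min-sf-word-injective-function} and chain the inequalities. The additional bookkeeping you include (trivial values of $k$, the bound $|w'|\geq\iota(w')\cdot\sigma$, and the observation that $w$ is of the $w_{\min}$ form) only makes explicit details the paper leaves implicit.
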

\ifpaper
\begin{proof}
	Let $w$ be given as in the statement and let $k\in[|w|]$.
	Let $w'\in\Sigma^*$ be any other word with $\iota(w) = \iota(w')$.
	By construction of $w'$, $|w'| \geq \iota(w) \sigma$.
	If $|w'| > \iota(w')\cdot\sigma$ holds, we know by Lemma \ref{lemma:min-existent-sf-word-size} there exists
	a word $w''\in\Sigma^*$ with $\iota(w'') = \iota(w')$, $|w''| = \iota(w)\cdot\sigma$, and $|\ScatFact_k(w'')| \leq |\ScatFact_k(w')|$.
	Now, by Proposition \ref{proposition:min-sf-word-injective-function}, there exists an injective function
	$f : \ScatFact_k(w) \rightarrow \ScatFact_k(w'')$.
	By that, we know that $|\ScatFact_k(w)| \leq |\ScatFact_k(w'')|$ and as $|\ScatFact_k(w'')| \leq |\ScatFact_k(w')|$, we also
	have $|\ScatFact_k(w)| \leq |\ScatFact_k(w')|$.\qed
\end{proof}
\else

\fi

There exist cases in which only the proposed structure of $w_{min}$ results in a maximal number of absent subsequences in all cases $k\in[\iota(w)+1,|w|-1]$, in particular when $\iota(w) = 2$ and $\sigma = 3$: see \Cref{tableminimalexisting}.
\begin{table}[h!]
	\begin{tabular}{c|c|c|c|c|c|c|}
		& $\mathtt{(abc)\cdot (abc)}$ & $\mathtt{(abc) \cdot (acb)}$ & $\mathtt{(abc) \cdot (bac)}$ & $\mathtt{(abc) \cdot (bca)}$ & $\mathtt{(abc) \cdot (cab)}$ & $\mathtt{(abc) \cdot (cba)}$ \\
		\hline
		$k=3$ & 17 & 16 & 16 & 14 & 14 & 13\\
		$k=4$ & 15 & 14 & 14 & 13 & 11 & 11\\
		$k=5$ & 6 & 6 & 6 & 6 & 5 & 5
	\end{tabular}
	\caption{Numbers of existing subsequences per perfect $2$-universal word for $k \in [3,5]$.}\label{tableminimalexisting}
\end{table}

Note that the previous results do not exclude the existence of $\iota$-universal words $w\in\Sigma^{\iota\sigma}$ that do not follow the structure of $w_{min}$ but have a minimal number of existent subsequences over all $k \in [\iota(w)+1,|w|-1]$. 

	
	\section{Enumeration Algorithms}\label{sec:enum}
	So far, we focused on the combinatorial properties of the set of subsequences of length $k$ of a word $w$ (and their implications w.r.t. the set of missing subsequences). In this section, we consider the algorithmic problem of enumerating the elements of this set. 
In particular, we propose optimal  enumeration algorithms for the set of subsequences of length $k$ of a given word. 
Enumeration algorithms with output-linear delay for this set of subsequences can be obtained (as explained below) via subsequence-automata \cite{CrochemoreMT03}, while counting the length-$k$ subsequences of $w$ can be done in $O(k|w|)$ time by a dynamic programming algorithm \cite{ELZINGA2008394} (which can also be extended to an enumeration algorithm). However, such enumeration algorithms are not optimal, and our results improve them significantly. We are not aware of any other results related to the efficient (incremental) enumeration of the subsequences of a word; recent related works approach, based on totally different ideas, the problems of designing efficient algorithms for the enumeration of the longest common subsequences of two strings (see \cite{Sakai25,Sakai25-2} and the references therein) and for the enumeration of the maximal common subsequences of two strings (see \cite{ConteGPU22,BuzzegaEtAl} and the references therein). 
First, we are going to present an informal overview of the procedure.

Firstly we give a high-level overview our algorithmic results; the reader not interested in the technical details of our algorithms can read the next subsection only, and skip Section \ref{sec:enum_tech_details}. The reader interested in these details can also directly go to the respective subsection.

\subsection{Overview}

Our results are based on two key ideas. The first is related to the notion of {\em greedy embedding} of a subsequence $v$, of length $k$, occurring in a word $w$, of length $n$, introduced in Section \ref{sec:upbound} in connection to Lemma \ref{lemma:next-function-property-scatfact-implies-sequence}. 
In the context of this section, we note that two distinct greedy embeddings correspond to two distinct subsequences, and every subsequence has a greedy embedding, we conclude that it is enough to enumerate the sets of greedy embeddings of subsequences of length $k$. \looseness=-1

Secondly, we recall that one can construct in time $O(n\sigma)$ a deterministic finite automaton $A = (Q,\Sigma,q_0, F, \delta)$ with $L(A) = \{v | v$ is a subsequence of $w\}$; see, e.g., \cite{CrochemoreMT03}. In this automaton, we have the states $Q = \{0,\ldots,n,n+1\}$, $q_0 = 0$, $F=Q\setminus \{0,n+1\}$. Moreover, we define the transition function $\delta:Q\times\Sigma\rightarrow Q$ as follows. For $i\in [n]$ and $a\in \Sigma$, we set $\delta(i,a) =  \min (\{n+1\}\cup \{j\mid i<j\leq n, w[j]=a\}) $; further, $\delta(n+1,a)=n+1$ for all $a\in \Sigma $ (i.e., $n+1$ is an error-state). Importantly, for a subsequence $v$, of length $m$, occurring in $w$, the automaton $A$ will accept $v$ while going through the sequence of states $0, i_1, \ldots, i_m$, where $i_1, \ldots, i_m$ is the greedy embedding of $v$ in $w$. 

Thus, we may enumerate all subsequences of length $k$ of $w$ (more precisely, their greedy embeddings) after an $O(n\sigma)$ time preprocessing, with $O(k)$ delay, by way of a depth-first traversal over the length $k$ paths in the graph corresponding to the automaton $A$.
This can be improved, as long as we are interested in an incremental enumeration: we simply produce the first output (the subsequence $w[1:k]$, represented, e.g., as pair $(i,k)$), and then, in subsequent steps of our enumeration, we output a short sketch of the edits needed to be performed to obtain the current subsequence from the previous one. In particular, if these edits are "remove suffix of length $\ell$, add $w[a+1:a+\ell]$ as a new suffix to the current subsequence" (where in the last part, $w[a:b]$ is simply represented as the pair of positions $a,b$ of $w$), then we can obtain, using the techniques from \cite{AdamsonGM24}, an enumeration algorithm with $O(n\sigma)$ time preprocessing and $O(1)$ delay. This algorithm allows, at any moment, to output explicitly the currently enumerated subsequence.\looseness=-1

We improve these results, and show that, both in the case of an explicit enumeration algorithm and in that of an incremental enumeration algorithm, the preprocessing time can be reduced to $O(n)$ and, hence, we obtain optimal explicit and, respectively, incremental enumeration algorithms. \looseness=-1

The main obstacle in achieving $O(n)$ preprocessing time is the explicit construction of the DFA $A$, of size $O(n\sigma)$. However, when working with $A$, we only needed the states $Q$, and for each $i\in Q\setminus \{n,n+1\}$ the set of its relevant successor-states, $L_i=\{j\neq n+1 \mid \exists a\in \Sigma, \delta[i,a]=j\}$. The key observation is that we can identify the elements of $L_i$, for $i<n$, on the fly, and we do not need to precompute these sets. However, we need some additional data structures: the array $\prevArray[\cdot]$, of size $n$, where $\prevArray[i]= \max \lbrace j \mid w[j] = w[i], j < i \rbrace$, and $\prevArray[i] = 0$ if $w[i] $ does not occur in $w[1:i-1]$; and range minimum query data structures for $\prevArray[\cdot]$ (denoted $RMQ_{\prevArray}$). Now, back to the computation of $L_i$, we always have $i+1\in L_i$ (for $i<n$). All the remaining elements of $L_i$, if any, are in the range $[i+2:n]$; so we define a list $I_i$ containing only this range. Then we iterate the following process (which simulates the traversal of the list $L_i$, in the case when it was precomputed), but on the fly, each time we need to identify a new state-successor of $i$. Until $I_i$ becomes empty, we extract the range $[a:b]$ from $I_i$ and set $j=RMQ_{\prevArray}[a,b]$. If $\prevArray[j]\leq i$, then $j\in L_i$ holds, and further elements of $L_i$ can be potentially found in the ranges $[a,j-1]$ and $[j+1:b]$, so we add them to $I_i$. If $\prevArray[j]> i$, then there is no element of $L_i$ in the range $[a:b]$, as all $\prevArray[\ell]>i$ for all $\ell\in [a:b]$. This process allows us to identify the successors of a state in $A$ in $O(1)$ time per successor, with $O(n)$ preprocessing, and it immediately leads to an implementation of the previous enumeration algorithm with $O(n)$ preprocessing time and $O(k)$ delay. This is optimal when we want the outputs explicitly. 

All the algorithms based on the exploration of $A$ maintain a stack which contains the positions $i_1,\ldots,i_k$ of the greedy embedding of the currently enumerated subsequence. So, to get to $O(1)$ delay (for incremental enumeration, as described above), the novel idea is to maintain the greedy embedding of the currently enumerated subsequence in a more succinct way, which allows us to identify quicker the next subsequence to be enumerated, and still do the enumeration in the same order as in the previous algorithms. This succinct representation is actually relatively simple: as contiguous factors of the currently enumerated subsequence $v$ can also be embedded, in the greedy embedding of $v$ in $w$, into contiguous factors of $w$, we will not maintain this embedding of $v$ as a sequence of separate positions of $w$, but as a sequence of maximal contiguous factors of $w$ (that is, we represent compactly, as a range $[i:j]$, the infixes of the greedy embedding which correspond to the factor $w[i:j]$). This representation can be implemented in a way that allows us to identify in $O(1)$ the longest prefix (i.e., the rightmost position $i_j$) of the greedy embedding $i_1,\ldots, i_k$ of $v$ in $w$ such that the sequence of positions $i_1,\ldots, i_j$ can be extended to obtain a subsequence not enumerated before. A key idea here is that this extension can also be done in $O(1)$: we perform this extension with a sequence $i'_{j+1},i'_{j+1}+1,\ldots, i'_{j+1}+k-1 $ (represented succinctly as the range $[i'_{j+1}:i'_{j+1}+k-1 ]$, corresponding to the factor $w[i'_{j+1}: i'_{j+1}+k-1]$ of $w$). We choose $i'_{j+1}$ exactly in the same way as in the RMQ-based algorithm, overviewed above, which will guarantee that we did not enumerate any subsequence whose greedy embedding starts with $i_1,\ldots, i_j, i'_{j+1}$ until now. Moreover, our algorithm enumerates then all subsequences whose greedy embedding starts with $i_1,\ldots, i_j, i'_{j+1}$, and then moves on and finds a new position $i''_{j+1}$ and enumerate all subsequences whose greedy embedding starts with $i_1,\ldots, i_j, i''_{j+1}$. 
The result is an enumeration algorithm for the set of subsequences of length $k$ of $w$, with $O(n)$ preprocessing time and $O(1)$ delay. The currently enumerated subsequence can be also output, in $O(k)$ time. \looseness=-1

\subsection{Technical Details}\label{sec:enum_tech_details}

The computational model we use to state our algorithms is the standard unit-cost word RAM with logarithmic word-size $\omega$ (meaning that each memory word can hold $\omega$ bits). It is assumed that this model allows processing inputs of size $n$, where $\omega \geq \log n$; in other words, the size $n$ of the data never exceeds (but, in the worst case, is equal to) $2^\omega$. Intuitively, the size of the memory word is determined by the processor, and larger inputs require a stronger processor (which can, of course, deal with much smaller inputs as well). Indirect addressing and basic arithmetical operations on such memory words are assumed to work in constant time. Note that numbers with $\ell$ bits are represented in $O(\ell/\omega )$ memory words, and working with them takes time proportional to the number of memory words on which they are represented. This is a standard computational model for the analysis of algorithms, defined in \cite{FredmanW90}. Our algorithms have strings and numbers as input, so we follow standard assumptions, Namely, we work with languages over {\em integer alphabets} (see, e.\,g.,~\cite{crochemore}): whenever we are given a string of size $n$ as input, we assume that its alphabet $\Sigma=\{1,2,\ldots,\sigma\}$, with $|\Sigma|=\sigma\leq n$.

\begin{observation}\label{preprocessing}
The following data structures were introduced in \cite{DBLP:journals/fuin/KoscheKMS22,KoscheKMP22}, and we recall them here. Assume $w$ is a word of length $n$. 
\begin{itemize}
\item In \cite{KoscheKMP22} it is shown that we can construct the $n\times \sigma$ matrix $\nextpos[\cdot][\cdot]$, where $\nextpos[i][a] = \min (\{n+1\}\cup \{j \in [i + 1, n] \mid w[j]=a\}) $, for $i\leq n$ and $a\in \Sigma$. Intuitively, $\nextpos[i][a]$ is the first position of the word $w$ where $a$ occurs after position $i$ (or $n+1$ if there is no such occurrence).  
\item In \cite{DBLP:journals/fuin/KoscheKMS22} it is shown that we can construct in $O(n)$ time the array $\prevArray[\cdot]$, of size $n$, where $\prevArray[i]$ stores the rightmost position of $w[i]$ in the word $w[1:i-1]$.
That is, $\prevArray[i] = \max \lbrace j \in [1, i -1] \mid w[j] = w[i]\rbrace$, while we assume $\prevArray[i] = 0$ if $w[i] $ does not occur in $w[1:i-1]$. 
\end{itemize}
\end{observation} 

We recall the {\em greedy embedding} of a subsequence $v \in \Sigma^m$ occurring in a word $w$, of length $n$ (introduced in Section \ref{sec:upbound} in connection to Lemma \ref{lemma:next-function-property-scatfact-implies-sequence}). For $j\in [m]$, we define $i_j$ as the leftmost position of $w$, such that $w[1:i_j]$ contains the subsequence $v[1:j]$. Then, $w[i_1]\cdots w[i_m]=v$. The sequence $i_1,\ldots,i_m$ is the {\em greedy embedding} of $v$ in $w$, and can be computed in linear time (see, e.g., \cite{DBLP:journals/fuin/KoscheKMS22}). 

\begin{lemma}\label{lemmaDFA}
Given a word $w \in \Sigma^*$, with $|w|=n$, $|\Sigma|=\sigma$, and an integer $k \leq n$, we can enumerate all subsequences of length $k$ with $O(n \sigma)$ time preprocessing and $O(k)$ time delay.
\end{lemma}
\begin{proof}
We first build the data structures from Observation \ref{preprocessing}.

We can construct in time $O(n\sigma)$ a deterministic finite automaton $A = (Q,\Sigma,q_0, F, \delta)$ with $L(A) = \{w' | w'$ is a subsequence of $w\}$; see, e.g., \cite{CrochemoreMT03}. We have $Q = \{0,\ldots,n,n+1\}$, $q_0 = 0$, $F=Q\setminus \{0,n+1\}$. We define the transition function $\delta:Q\times\Sigma\rightarrow Q$ as follows. For $i\in [n]$ and $a\in \Sigma$, we set $\delta(i,a) = \nextpos[i,a]$; further, $\delta(n+1,a)=n+1$ for all $a\in \Sigma $ (i.e., $n+1$ is an errror-state).

Recall that, for a subsequence $v$, of length $m$, occurring in $w$, the automaton $A$ will accept $v$, going through the sequence of states $0, i_1, \ldots, i_m$, where $i_1, \ldots, i_m$ is the greedy embedding of $v$ in $w$. 

Given $A$, we can compute, for each state $i\in Q$, a set $L_i$ (implemented as list) of the targets of the $\sigma$ transitions leaving state $i$; more precisely, $L_i = \{\delta(i,\ell)\mid \ell \in \Sigma \}$. Note that $i+1\in L_i$, for all $i\in Q\setminus \{n+1\}$, as $i+1=\delta(i,w[i+1])$.  In the following we will consider the set $L_i$ to be implemented as linear list, and thus will talk of the list $L_i$, rather than of the set $L_i$. This implementation of $L_i$ induces a total order between the elements of this set. For simplicity, we assume that $i+1$ is the first element of the list $L_i$, while the order of the rest of the elements of $L_i$ is arbitrary. 


Once we have the lists $L_i$, we can proceed to give the enumeration algorithm of the length $k$ subsequences of $w$. Intuitively, an ordered tree $T$, with labels on both nodes and edges, is induced by the lists $L_i$. The root of this tree is labelled with $0$, its nodes are labelled with states $i\in Q\setminus \{n+1\}$, and the children of a node whose label is $i$ are nodes whose labels are $j< n+1$ such that $j\in L_i$. The edge connecting a node with label $i$ and a node with label $j$ is labelled with $w[j]$ (and corresponds to the transition $j=\delta(i,w[j])$). The label of a walk in this tree is obtained by concatenating the labels of the edges on that walk. We also assume that the children of a node with label $i$ are ordered according to the order of the nodes of the list $L_i$, with a node labelled by $i+1$ being the first child of the node labelled with $i$. As the distinct subsequences of length $k$ of $w$ correspond bijectively to distinct accepting paths of length $k$ in the automaton $A$ (as each such path is a greedy embedding of a subsequence in $w$, and every two greedy embeddings correspond to distinct subsequences), it is immediate that they also correspond bijectively to the distinct paths of length $k$ in the tree $T$, starting from the root. Our enumeration algorithm simulates a depth-first traversal of this tree, up to the nodes on level $k+1$ (assuming that the root is on level $1$), which outputs the labels of every walk. Importantly, note that at this point there are no two such paths labelled by the same word, as the labels of every two edges leaving a node are distinct. Nevertheless, constructing this tree effectively in a preprocessing phase is too time consuming (and not necessary, as the lists $L_i$ contain already all the information we need, and are, as such, an implicit representation of $T$). We briefly explain in the following how our enumeration algorithm works. 

In our algorithm, we maintain a stack $S$ whose elements are pairs $(i,\ell)$, where $i\in Q\setminus \{n+1\}$ and $\ell$ is a position of the list $L_i$.

Initially, we set $S=(0,1)$. 

We now describe a general step of our algorithm. Let 
\[
S=((0,\ell_0), (i_1, \ell_1), \ldots, (i_t, \ell_t))
\]
(where the top of the stack is its rightmost element $(i_t, \ell_t)$). Following the intuition above, the stack $S$ contains the labels $0,i_1,\ldots,i_t$ of the nodes on the currently explored walk starting with the root, which has length $t$ and label $w[i_1]\cdots w[i_t]$. For each $j\in [0:t]$, $\ell_j$ encodes the fact that our traversal already completely explored the first $\ell_j-2$ children of the $(j+1)^{th}$ node on this path (which has label $i_j$), is now exploring the paths that go through the $(\ell_j-1)^{th}$ child of that node, and will explore, once these paths are completely explored, those going through the $\ell_j^{th}$ child of the respective node. 

If $t=k$, we output $w[i_1]\cdots w[i_t]$ and remove $(i_t, \ell_t)$ from the stack. In this case, we have found a path of length $k$ starting with the root, so we need to backtrack and find another one. We then repeat this general step.  

If $t<k$ and the element on the top of the stack is of the form $(i_t,\ell_t)$, where the element on position $\ell_t$ of $L_i$ is either $(n+1)$ or some $j>n-(k-t)+1$, we pop it and repeat this general step. In this case, there is no continuation of the path encoded by $S=((0,\ell_0), (i_1, \ell_1), \ldots, (i_t, \ell_t))$ which could lead to a path of length $k$, so we need to backtrack.

If $t<k$ and the element on the top of the stack is of the form $(i_t,\ell_t)$ where the element on position $\ell_t$ of $L_i$ is some $j\leq n-(k-t)+1$, we replace $(i_t, \ell_t)$ by $(i_t, \ell_t+1)$ and push $(j,1)$ in $S$; that is, $S$ becomes $((0,\ell_0), (i_1, \ell_1), \ldots, (i_t, \ell_t+1), (j,1))$, and repeat the general step of the algorithm. Basically, we were able to extend the path encoded by the contents of $S$, in this case. Worth noting, it is guaranteed that $S=((0,\ell_0), (i_1, \ell_1), \ldots, (i_t, \ell_t+1), (j,1))$ encodes a path that either has already length $k$ or can be continued to a path of length $k$, namely the one corresponding to the sequence $((0,\ell_0), (i_1, \ell_1), \ldots, (i_t, \ell_t+1), (j,1), (j+1,1),\ldots, (j+(n-(k-t)+1), 1))$. 

Observe that our algorithm simply implements a depth-first traversal of the tree $T$,  succinctly represented by the lists $L_i$. 
From the explanations given above, it is immediate that each path of length $k$, starting in the root of the respective tree $T$, is output once by our algorithm. Firstly, it is clear that for each greedy embedding of a subsequence (so, for each subsequence), there is a path of length $k$ in the automaton, and in the corresponding tree, so the subsequence corresponding to that embedding will be output at some point in our algorithm. To show that each subsequence is output only once, note that the nodes on such a path of the automaton/tree, say $0,i_1,  \ldots, i_t$, also correspond, by their definition, to a greedy embedding of some subsequence of $w$. After outputting the respective subsequence, our algorithm simply finds the longest prefix of this subsequence (that is, the rightmost position $i_j$ in this subsequence), such that $0,i_1,  \ldots, i_j$ can be completed to obtain a greedy embedding of some subsequence, which was not already output (as it differs on position $j+1$ with any other subsequence that shares the same prefix of length $j$, and was already output). Therefore, each subsequence of length $k$ is output exactly once by our algorithm. Finally, the delay between two outputs is $O(k)$: in the worst case, after outputting a subsequence, we have to pop $k$ elements from the stack, and push $k$ elements in it, and then we will again output a subsequence (or finish the enumeration). Thus, the statement holds. 
\end{proof}

%
%
%
\begin{algorithm}
\caption{Enumerate Subsequences of Length $k$}
\begin{algorithmic}[1]
\Require Word $w$ of length $n$, alphabet $\Sigma$ of size $\sigma$, integer $k \leq n$
\Ensure Enumeration of all subsequences of length $k$ with $O(n\sigma)$ preprocessing and $O(k)$ delay

\State Initialize stack $S = [(0,1)]$ \Comment{Start from state 0, first transition}

\While{$S$ is not empty}
    \State $(i_t, \ell_t) \gets$ top element of $S$
    
    \If{$|S| = k$} \Comment{Valid subsequence found}
        \State Output subsequence corresponding to $S$
        \State Pop $(i_t, \ell_t)$ from $S$
        \State \textbf{continue}
    \EndIf

    \If{$\ell_t > |L_{i_t}|$ or $L_{i_t}[\ell_t] = n+1$ or $L_{i_t}[\ell_t] > n - (k - |S|) + 1$}
        \State Pop $(i_t, \ell_t)$ from $S$ \Comment{Backtrack}
    \Else
        \State $j \gets L_{i_t}[\ell_t]$ \Comment{Next state}
        \State Replace $(i_t, \ell_t)$ by $(i_t, \ell_t + 1)$ in $S$
        \State Push $(j, 1)$ onto $S$
    \EndIf
\EndWhile
\end{algorithmic}
\end{algorithm}

\begin{observation}
Following \cite{AdamsonGM24}, we can preprocess the deterministic finite automaton $A$ defined in the proof of Lemma \ref{lemmaDFA} such that we can enumerate succinct representations of all distinct words of length $k$ accepted by $A$ (i.e., subsequences of length $k$ of $w$), with $O(n \sigma)$ time preprocessing and $O(1)$ time delay. 
\end{observation}

More precisely, for an input directed graph, \cite{AdamsonGM24} defines the default edge for each vertex, and uses the induced default walk in graphs (walks consisting of default edges only) to succinctly represent arbitrary walks in the respective graph. Based on this, the walks of length $k$, starting with a given source-vertex of the input graph, are incrementally enumerated with $O(1)$ delay, after a preprocessing linear in the size of the graph. One starts with the default walk of length $k$ starting from the source (it is enough to output the length of this path). The enumeration then runs by always outputting the length of the suffix of the previous path that needs to be removed, and how a new path of length $k$ is then obtained by following a non-default edge (explicitly output) and again a default path (starting from the target-vertex of the non-default edge, and whose length is that of the removed suffix minus one). Crucially, the algorithm of \cite{AdamsonGM24}  maintains a data structure that allows, at each point of the enumeration, the explicit output of the currently enumerated walk, in optimal $O(k)$ time. As the deterministic finite automaton $A$ is a directed graph, the incremental enumeration (as briefly defined above) of the distinct words of length $k$ accepted by $A$ (i.e., the subsequences of length $k$ of $w$) can be done after an $O(n\sigma)$ time preprocessing, with $O(1)$ delay. 

Let us first improve the result of Lemma \ref{lemmaDFA} and given an optimal enumeration algorithm for the subsequences of a word $w$, when we are interested in effectively outputting these subsequences.
\begin{theorem}\label{lemmaRMQ}
Given a word $w \in \Sigma^*$, with $|w|=n$, $|\Sigma|=\sigma$, and an integer $k \leq n$, we can enumerate all subsequences of length $k$ with $O(n)$ time preprocessing and $O(k)$ time delay.
\end{theorem}

\begin{proof}
This algorithm is a more efficient implementation of the algorithm from Lemma \ref{lemmaDFA}. 

The main issue is that we cannot afford to explicitly construct the automaton $A$, and, more precisely, the lists $L_i$ introduced in the previous proof.
As recalled in Observation \ref{preprocessing}, we construct for the word $w$, in a preprocessing phase, the array $\prevArray[\cdot]$. 
Further, we construct data structures $RMQ_{\prevArray}$ allowing us to answer range minimum queries for the array $\prevArray[\cdot]$; this can be done in linear time \cite{BenderF02}. 

We now show how these data structures implicitly encode the sets $L_i$. Note that, since $L_i=\{j\mid \exists a\in \Sigma, \nextpos[i,a]=j\}$, we have that $L'_i=L_i\setminus \{n+1\}=\{j\mid \prevArray[j]\leq i\}$. Now, the key observation is that we can identify the elements $L'_i$, for $i<n$, as follows. We start with $i+1$, which is always in $L'_i$ (for $i<n$), and note that all the remaining elements of $L'_i$, if any, are in the range $[i+2:n]$; so we define a list $I_i$ containing this range. Then we repeat the following process, until $I_i$ is empty. We extract the range $[a:b]$ from $I_i$ and set $j=RMQ_{\prevArray}[a,b]$. If $\prevArray[j]\leq i$, we note that $j\in L'_i$ holds, and that further elements of $L'_i$ can be potentially found in the ranges $[a,j-1]$ and $[j+1:b]$, so we add them to $I_i$. If $\prevArray[j]> i$, then there is no element of $L'_i$ in the range $[a:b]$, as all $\prevArray[\ell]>i$ for all $\ell\in [a:b]$.

We will now rewrite our algorithm from the proof of Lemma \ref{lemmaDFA}, based on this idea.

As before, we maintain a stack $S$ whose elements are pairs $(i,I)$, where $i\in Q\setminus \{n+1\}$ and $I$ is a queue of disjoint ranges $[a:b]$, with $1\leq a\leq b\leq n$. 

Initially, we set $S=(0,I_0)$, where $I$ contains the single range $[1:n-k+1]$. 

We now describe a general step of our algorithm. Let 
\[
S=((0,I_0), (i_1, I_1), \ldots, (i_t, I_t))
\]
(where the top of the stack is its rightmost element $(i_t, \ell_t)$). In this case, the stack $S$ encodes the subsequence $w[i_1]\cdots w[i_t]$, which we try to extend to obtain a subsequence of length $k$. Moreover, for each $j\in [0:t]$, $I_j$ contains ranges $[a:b]$ which contain a position $\ell$ such that $\prevArray[\ell]\leq i_j$ (so an element of $L'_{i_j}$) and none of the subsequences starting with $w[i_1]\cdots w[i_j]w[\ell]$ was output yet. 

If $t=k$, we output $w[i_1]\cdots w[i_t]$ and remove $(i_t, I_t)$ from the stack. In this case, we have found a path of length $k$ starting with the root, so we need to backtrack and find another one. We then repeat this general step.  

If $t<k$ and the element on the top of the stack is of the form $(i_t,\emptyset)$, we pop it and repeat this general step. In this case, there is no continuation of the path encoded by $S=((0,\ell_0), (i_1, \ell_1), \ldots, (i_t, \ell_t))$ which could lead to a path of length $k$, so we need to backtrack.

If $t<k$ and the element on the top of the stack is of the form $(i_t,I_t)$ with $I_t\neq \emptyset$, we proceed as follows. We extract the first range $[a:b]$ of $I_t$. The first case is when $a=i_t+1$. In this case, let $\ell=RMQ_{\prevArray}[a+2:b]$ and, if $\prevArray[\ell]\leq i_t$, add $[a+2:b]$ to $I_t$. Let $i_{t+1}=i_t+1$ and $I_{t+1}$ contain only the range $[a+2:n-(k-t+1)+1]$, and push $(i_{t+1},I_{t+1})$ in the stack $S$. That is, $S$ becomes $((0,I_0), (i_1, I_1), \ldots, (i_t,I_t), (i_{t+1}, I_{t+1})$, and repeat the general step of the algorithm. Basically, we were able to extend the path encoded by the contents of $S$, in this case. The second case is when $a\neq i_t+1$. In this case, let $\ell_1=RMQ_{\prevArray}[a+1:b]$; by construction, we guarantee that $\prevArray[\ell]\leq i_t$. Further let $\ell_2=RMQ_{\prevArray}[a:\ell_1-1]$ and $\ell_3=RMQ_{\prevArray}[\ell_1+1:b]$. If $\prevArray[\ell_2]\leq i_t$, add $[a:\ell_1-1]$ to $I_t$; if $\prevArray[\ell_3]\leq i_t$, add $[\ell_1+1:b]$ to $I_t$.  Let $i_{t+1}=\ell_1$ and $I_{t+1}$ contain only the range $[\ell_1+1:n-(k-t+1)+1]$, and push $(i_{t+1},I_{t+1})$ in the stack $S$. That is, $S$ becomes $((0,I_0), (i_1, I_1), \ldots, (i_t,I_t), (i_{t+1}, I_{t+1})$, and repeat the general step of the algorithm. Basically, we were able to extend the subsequence encoded by the contents of $S$, in this case. It is worth noting that it is guaranteed that $S=((0,I_0), (i_1, I_1), \ldots, (i_t,I_t), (i_{t+1}, I_{t+1})$ encodes a path that either has already length $k$ or can be continued to a path of length $k$, namely the one corresponding to the sequence $w[i_1] \cdots w[i_{t+1}] w[i_{t+1}+1]\cdots w[i_{t+1}+(k-(t+1))]$. Moreover, each range contained in the queue $I_j$, for all $j\leq t+1$, contains at least one position $\ell$ such that $\prevArray [\ell]\leq j$ such that no subsequence of length $k$ starting with $w[i_1] \cdots w[i_{j}]w[\ell]$ was output yet; in other words, each range of the queue $I_j$ contains at least one position $\ell\in L'_{i_j}$ such that no subsequence of length $k$ starting with $w[i_1] \cdots w[i_{j}]w[\ell]$ was output yet. 

It is not hard to see that the above algorithm implements the algorithm of Lemma \ref{lemmaDFA}, where all the elements of $L'_i$ are identified and explored using $RMQ_{\prevArray}$ queries, instead of having them computed and stored from the beginning. Just like before, we explore all possible greedy embeddings of the subsequences of $w$. For each greedy embedding, we output the corresponding subsequence. Moreover, once we output a subsequence (corresponding to some greedy embedding), we again find the longest prefix (of length $j$) of that embedding that can be completed to obtain a new greedy embedding of length $k$, which was not explored. As before, we guarantee that the new embedding is different from all the already enumerated greedy embeddings with which it shares the prefix of length $j$; therefore, it leads to a new subsequence. The main difference is, as said, that instead of having the list $L'_i$ from the beginning, we compute it on the fly, using $RMQ_{\prevArray}$ queries,
The correctness of the algorithm follows from these observations. The delay is also $O(k)$, by the same arguments as in Lemma \ref{lemmaDFA}.\qed
\end{proof}

\begin{algorithm}
\caption{Enumerate Subsequences using RMQ}
\begin{algorithmic}[1]
\Require Word $w \in \Sigma^*$ of length $n$, integer $k \leq n$, array $\prevArray[\cdot]$
\Ensure All subsequences of length $k$
\State Preprocess $\prevArray[\cdot]$ to build $RMQ_{\prevArray}$
\State Initialize stack $S \gets [(0, \{(1, n-k+1)\})]$

\While{$S$ is not empty}
    \State Let $(i_t, I_t)$ be the top of the stack

    \If{$|S|= k$}
        \State Output subsequence represented by $S$
       \If{$(I_t) \neq \emptyset$}
       	\State Pop top element of $(I_t)$
       \Else
       	\State Pop top element from $S$
       \EndIf
        \State \textbf{continue}
    \EndIf

    \If{$I_t$ is empty}
        \State Pop $(i_t, I_t)$ from the stack
        \State \textbf{continue}
    \EndIf

    \State Extract the first interval $[a, b]$ from $I_t$

    \If{$a = i_t + 1$}
        \State $\ell \gets RMQ_{\prevArray}[a+2, b]$
            \If{$\prevArray[\ell] \leq i_t$}
                \State Append $[a+2, b]$ to $I_t$
            \EndIf
        \State Push $(a, \{(a+2, n - (k - t + 1) + 1)\})$ onto $S$
        \State \textbf{continue}
    \EndIf

    \State $\ell \gets RMQ_{\prevArray}[a+1, b]$

    \If{$\prevArray[\ell] > i_t$}
        \State \textbf{continue}
    \EndIf

    \State $\ell_1 \gets RMQ_{\prevArray}[a, \ell - 1]$
    \State $\ell_2 \gets RMQ_{\prevArray}[\ell+1, b]$

    \If{$\prevArray[\ell_1] \leq i_t$}
        \State Append $[a, \ell - 1]$ to $I_t$
    \EndIf

    \If{$\prevArray[\ell_2] \leq i_t$}
        \State Append $[\ell + 1, b]$ to $I_t$
    \EndIf

    \State Push $(\ell, \{(\ell + 1, n - (k - t + 1) + 1)\})$ onto $S$
\EndWhile
\end{algorithmic}
\end{algorithm}

We can now present the main result of this section. As in \cite{AdamsonGM24}, we propose an incremental enumeration of the subsequences of length $k$ of some string $w$. We start with the subsequence $w[1:k]$ (encoded as the pair $(1,k)$), which is the greedy embedding of the word $w[1:k]$ in $w$. Then we enumerate the subsequences of length $k$ of $w$ in the sense that we output the length $\ell$ of the suffix of the previously enumerated subsequence which should be removed, and replaced with some factor $w[a+1:a+\ell]$ (encoded as $(a+1,a+\ell)$). Simultaneously, we maintain a data structure, from which the current subsequence can be effectively retrieved in $O(k)$ time; this structure contains, in fact, the greedy embedding of the current subsequence in the input word, represented succinctly. In this context, we can show the following theorem.
\begin{theorem}
Given a word $w \in \Sigma^*$, with $|w|=n$, $|\Sigma|=\sigma$, and an integer $k \leq n$, we can enumerate all subsequences of length $k$ with $O(n )$ time preprocessing and $O(1)$ time delay.
\end{theorem}

\begin{proof}
This time, we implement more efficiently the strategy of Theorem \ref{lemmaRMQ}. As the respective Lemma was, at its turn, reimplementing the strategy of Lemma \ref{lemmaDFA}, this guarantees the correctness of our approach. 

The novel idea we use here is to maintain the greedy embedding of the currently enumerated subsequence in $w$ in a more succinct way, which allows us to identify quicker the next subsequence to be enumerated, and still do the enumeration in the same order as in the algorithm of Theorem \ref{lemmaRMQ}. This succinct representation is relatively straightforward: as contiguous factors of the currently enumerated subsequence $v$ can also be embedded, in the greedy embedding of $v$ in $w$, into contiguous factors of $w$, we will not maintain this embedding of $v$ as a sequence of separate positions of $w$, but as a sequence of maximal contiguous factors of $w$ (that is, we represent compactly, as a range $[i:j]$, the infixes of the greedy embedding which correspond to the factor $w[i:j]$). We will show, in the following, that this representation allows us to identify in $O(1)$ the longest prefix (i.e., the rightmost position $i_j$) of the greedy embedding $i_1,\ldots, i_k$ of $v$ in $w$ such that the sequence of positions $i_1,\ldots, i_j$ can be extended to obtain a subsequence not enumerated before. A key idea here is that this extension can also be done in $O(1)$: we will perform this extension with a sequence $i'_{j+1},i'_{j+1}+1,\ldots, i'_{j+1}+k-1$ (represented as range $[i'_{j+1}: i'_{j+1}+k-1]$, corresponding to the factor $w[i'_{j+1}: i'_{j+1}+k-1]$ of $w$). We will choose $i'_{j+1}$ exactly in the same way as in Theorem \ref{lemmaRMQ}, which will guarantee that we did not enumerate any subsequence whose greedy embedding starts with $i_1,\ldots, i_j, i'_{j+1}$ until now. Moreover, our algorithm will then enumerate all subsequences whose greed embedding starts with $i_1,\ldots, i_j, i'_{j+1}$, and then moves on and find a position $i''_{j+1}$ and enumerates all subsequences whose greedy embedding starts with $i_1,\ldots, i_j, i''_{j+1}$. This is done, until no such position ($i'_{j+1}$, $i''_{j+1}$, and so on) is found. The implementation of this strategy is, however, a bit more technical than the algorithms in the previous lemmas.

Before starting to describe the details of this strategy, we build some more data structures for the input word $w$. Consider the decomposition of $w$ in runs: $w=a_1^{i_1}a_2^{i_2}\cdots a_k^{i_k}$, where $a_i\neq a_{i+1}$ for $i\in [k-1]$. We define and array $R$, with $k\leq n$ elements, which stores the run length encoding of $w$. Basically, for $i\in [k]$, we want to have $R[i]=(a_i,j_i+1,j_{i+1})$, where $j_i=\sum_{\ell=1,i-1} i_\ell$. Moreover, we also define and array $M$ with $n$ elements, such that for each $i\in [n]$, we have $M[i]=k$ if and only if $R[k]=(a,x,y)$, with $x\leq i\leq y$. 

It is not hard to see that the elements of $R[\cdot]$ and $M[\cdot]$ can be computed in linear time $O(n)$. For simplicity, let $w'=ww[n]$; this simply ensures that we consider a word $w'$ that ends with a repetition, which allows us to avoid corner cases when computing the elements of $M$ and $R$, by making sure that we go through all positions of $w$ (while not considering the last letter of $w'$, which has only a technical role in our computation). We first set $j_1=0$. Then, starting with $i=1$, we increment $i$ and set $M[i]=1$ while $w'[i]=w'[i+1]$ and $i\leq n$. When this loop ends, we set $R[1]=(w'[i], 1, i)$ and $j_2=i$. Then, for some $\ell\geq 2$, we proceed similarly. We start with $i=j_{\ell}+1$, and we increment $i$ and set $M[i]=\ell$ while $w'[i]=w'[i+1]$ and $i\leq n$; when this loop ends, we set $R[\ell]=(w'[i], j_{\ell}+1, i)$ and $j_{\ell+1}=i$.  

Further, we define two further arrays, with each $n$ elements. The first one, $\jumpAhead[\cdot]$, is defined as $\jumpAhead[i]=\min(\{n+1\}\cup \{j\mid j>i+1, w[j] \neq w[i+1]\}$. The second one, $\jumpBack[\cdot]$, is defined as $\jumpBack[i]=\max(\{0\}\cup \{j\mid j<i-1, w[j] \neq w[i-1]\}$. Before showing how they are computing, let us understand the role of such arrays. Assume that we have a greedy embedding $i_1,\ldots,i_\ell$ of some subsequence of length $\ell$ of $w$; then, we can always extend this greedy embedding, and obtain two distinct, new greedy embeddings, by either $i_\ell+1$ or $\jumpAhead[i_\ell]$ (provided that both these positions are at most $n$). Similarly, if the positions of the above greedy embedding of $v$ in $w$ form, in fact, a range $[i_1:i_\ell]$, then $\jumpBack[i_\ell]$ (if greater or equal than $i_1$) gives us a position $i_j$, with $1\leq j\leq \ell$, such that a new greedy embedding, distinct from $i_1,\ldots,i_\ell$, but sharing with it the prefix $i_1,\ldots,i_j$, can be obtained. These arrays can be clearly computed in linear time. We simply explain how this is done for $\jumpAhead[\cdot]$: for some $i\leq n$, let $(a,x,y)=M[i]$. If $y>i$, then $\jumpAhead[i]=y+1$. If $y=i$ and $y<n$, then let $(b,y+1,y')=M[i+1]$, and set $\jumpAhead[i]=y'+1$. Else, set $\jumpAhead[i]=n+1$.

As a final step in this preprocessing, following Observation \ref{preprocessing}, we construct for the word $w$ the array $\prevArray[\cdot]$. Further, as in Theorem \ref{lemmaRMQ}, we construct data structures $RMQ_{\prevArray}$ allowing us to answer range minimum queries for the array $\prevArray[\cdot]$; this can be done in linear time \cite{BenderF02}. 

We can now explain how our enumeration algorithm works.  

Firstly, we define a subroutine $\splitt(a,b,\ell)$, for parameters $1\leq a\leq b\leq n$ and $\ell\leq n$, which will run in constant time. The idea of this subroutine is that we assume that there exists a subsequence $v$ of length $\ell$, whose greedy embedding in $w$ has the form $\alpha, a,a+1,\ldots,b$ (where $\alpha$ is a strictly increasing sequence of positions of $w$). In this subroutine, we compute the largest $j\in [a:b-1]$ (i.e., the largest prefix of the greedy embedding, which ends between $a$ and $b$) such there exists a subsequence $v'$ of $w$ whose greedy embedding is $\alpha, a, a+1, \ldots, j, j', j'+1, \ldots, j'+r$, with $j'> j+1$. Note that such a greedy embedding  $\alpha, a, a+1, \ldots, j, j', j'+1, \ldots, j'+r$ corresponds to a difference subsequence than $\alpha, a,a+1,\ldots,b$, due to the definition of greedy embeddings. 

The algorithm implemented by this subroutine $\splitt(a,b,\ell)$ is the following. If $\jumpAhead[b-1]\leq n- k+\ell$, then $\splitt(a,b,\ell)$ returns $b-1$ (as we can take $j'=\jumpAhead[b-1]$, given that $w[j']\neq w[b]$).  If $\jumpAhead[b-1]> n- k+\ell$, then let $j=\jumpBack[b]$. As  $\jumpAhead[b-1]\leq n- k+\ell$, for every position $g$ with $b-1\geq g>j$ we have that $\jumpAhead[b-1]=\jumpAhead[g]$, so there is no subsequence $u$ of length $k$ of $w$, whose greedy embedding is $\alpha, a, a+1, \ldots, g, g', g'+1, \ldots, g'+f$, with $g'\neq g+1$. Now, we can return $j=\jumpBack[b]$ as the result of $\splitt(a,b,\ell)$ if and only if $n\geq b+k-\ell+1$. Indeed, in that case, we can construct the subsequence $v$ of length $k$ whose greedy embedding is $\alpha, a, a+1, \ldots, j, j+2, j+3, \ldots,  b, b+1, \ldots, b+k-\ell+1 $, and note that $\jumpAhead[j]=j+2$. If we did not return yet any $j$, then we return $-1$.

Once the subroutine is defined, we can proceed with the enumeration algorithm. We maintain two stacks $S$ and $M$; for simplicity, we assume that both are implemented statically (they have at most $n+1$ elements) and there are pointers $top(S)$ and $top(M)$ to the top elements of these two stacks). The stack $S$ contains tuples $([i:j], \beta, I, \ell)$ where $\beta\in \{0,1\}$, $0\leq i\leq j\leq n$, $\ell\leq n$, and $I$ is a queue of disjoint ranges $[i':j']$, with $1\leq i'\leq j'\leq n$ (this queue $I$ has the same role as the queue $I$ defined and used in the proof of Theorem \ref{lemmaRMQ}, for the pair $(j,I)$). Moreover, if the content of $S$ is $(([0:0], \beta_0, I_0, 0), ([i_1:j_1], \beta_1, I_1, \ell_1), \ldots, ([i_t:j_t], \beta_t, I_t, \ell_t))$, where the top of the stack is to the right, then $w[i_1:j_1]\cdots w[i_t:j_t]$ is the currently enumerated subsequence, and $\ell_r=\ell_{r-1}+(j_r-i_r+1)$ for $r\in [t]$ and $\ell_0=0$. As said, the queues $I_r$ are used as in the previous lemma (considered for the pairs $(j_r,I_r)$), while $\beta_r=1$ has the meaning that new subsequences whose greedy embedding has a  prefix $w[i_1:j_1]\cdots w[i_r:g]$ for some $g\leq j_r$ still have to be enumerated. On the other hand, $M$ is stack that contains those positions of $S$ where we have records $([i:j], 1, I, \ell)$ (increasingly ordered bottom to top); in other words, $top(M)$ returns the upmost element of $S$ with the second component equal to $1$. 

In our enumeration, we first define the tuples $([0:0], \beta_0, I_0, 0)$, $([1:k-1], \beta_1, I_1, k-1)$, $([k:k], 0, \emptyset, k)$. If $RMQ_{\prevArray}[2:n-k+1]= 0$, we set $I_0=([2:n-k+1])$ and $\beta_0=1$; this allows us to see if there are subsequences of length $k$ whose greedy embedding does not start with $w[1]$ (and to store this information, so that we can explore these subsequences later). If $RMQ_{\prevArray}[k+1:n]\leq k-1$, we set $I_1=([k+1:n])$ and $\beta_1=1$; this allows us to see if there are subsequences of length $k$ whose greedy embedding has the form $w[1:k-1]w[f]$ for some $f\neq k$ (and store this information, so that we can explore them later). If we did not already set $\beta_1=1$, then we compute $j=\splitt(1,k-1,k-1)$; if $j\neq -1$, we set $\beta_1=1$. Once the tuples are computed, we set $S=([0:0], \beta_0, I_0, 0), ([1:k-1], \beta_1, I_1, k-1), ([k:k], 0, \emptyset, k))$, and define $M$ accordingly (to contain the positions of $S$ where elements whose second component is $1$ are stored). We output $(1,k)$, representing the subsequence $w[1:k]$, the first in our enumeration. 

In a general step in the enumeration, we proceed as follows. Set $top(S)=top(M)$. If the stacks are empty, we stop. Note, at this point, that setting $top(S)=top(M)$ corresponds to removing some elements of $S$. However, this is not done explicitly: we simply move the top-pointer for $S$ to the desired position in $O(1)$ (possible due to the static implementation of the stacks), and then we overwrite the elements that were above this position before.

Futher, we pop the element $([i:j],1,I,\ell)$ from the top of $S$ (as well as the top element of $M$). We now have to do a case analysis. 

If $I\neq \emptyset$, the analysis is very similar to that implemented in the algorithm of Theorem \ref{lemmaRMQ}. We extract the first interval $[c:d]$ from the queue $I$. Let $g=RMQ_{\prevArray}[c:d]$. Let $g_1=RMQ_{\prevArray}[c:g-1]$ and $g_2=RMQ_{\prevArray}[g+1:d]$. If $g_1\leq j$, insert $[c:g-1]$ in $I$; if $g_2\leq b$, insert $[g+1:d]$ in $I$. Let $I'$ be the queue obtained from $I$ after these potential insertions. We now consider the tuples $([i:j],\beta',I',\ell)$, $([g:g'-1],\beta'', I'', k-1)$, $([g':g'],0, \emptyset', k)$; here $g'=g+(k-\ell-2)$. If $I'\neq \emptyset$, we set $\beta'=1$; if $I'=\emptyset$ and $\splitt(i,j,\ell)\neq -1$, we set $\beta'=1$. Further, if $RMQ_{\prevArray}[g':n]\leq g'-1$, then we add $[g':n]$ to $I''$ and set $\beta''=1$; if $RMQ_{\prevArray}[g':n]> g'-1$, we set $\beta''=1$ when $\splitt(g,g'-1,k-1)\neq -1$, and set $\beta''=0$ otherwise. We insert $([i:j],\beta',I',\ell)$, $([g:g'-1],\beta'', I'', k-1)$, $([g':g'],0, \emptyset', k)$, in this order, in $S$, and update the top elements of $M$ accordingly. We output: delete the suffix of length $k-\ell$ of the previous subsequence, add suffix $(g,g')$. This case was very similar to the first step of our enumeration. 

If $I= \emptyset$, we compute $h=\splitt(i,j,\ell)$, and let $g=\jumpAhead[h]$. We now consider the tuples $([i:h],\beta',I',\ell')$, $([g:g'-1],\beta'', I'', k-1)$, $([g':g'],0, \emptyset', k)$; here $\ell'=\ell-(b-h)$ and $g'=g+(k-\ell'-2)$. 
If $RMQ_{\prevArray}[g+2:n-\ell'+1] \leq h$, we set $I'=[g+2:n-\ell'+1]$ and $\beta'=1$; otherwise, we compute $e=\splitt(i,h,\ell')$ and set $\beta'=1$ if $e\neq -1$ and $\beta'=0$, otherwise.
Further, if $RMQ_{\prevArray}[g':n]\leq g'-1$, then we add $[g':n]$ to $I''$ and set $\beta''=1$; if $RMQ_{\prevArray}[g':n]> g'-1$, we set $\beta''=1$ when $\splitt(g,g'-1,k-1)\neq -1$, and set $\beta''=0$ otherwise. 
We insert $([i:h],\beta',I',\ell')$, $([g:g'-1],\beta'', I'', k-1)$, $([g':g'],0, \emptyset', k)$, in this order, in $S$, and update the top elements of $M$ accordingly. We output: delete the suffix of length $k-\ell'$ of the previous subsequence, add suffix $(g,g')$. Note that, at this point, if desired by the used, we can output (based on the content of $S$) the currently enumerated subsequence, as explained when the stack $S$ was introduced.

This concludes the description of our algorithm. 

From the above, it is clear that the delay between two outputs is constant. Now, for the correctness, we note that the first output corresponds to the first output of the algorithm in Theorem \ref{lemmaRMQ}. Assume that the first $i$ outputs of the current algorithm correspond, respectively, to the first $i$ outputs of the algorithm of Theorem \ref{lemmaRMQ}. In both cases, we then traverse right to left the subsequence we output last, and find the longest prefix of the respective subsequence which can be extended in a different way than in the previously enumerated subsequences, to obtain a novel subsequence of length $k$ which was not enumerated yet. The only major difference is that in this algorithm we identify this prefix in $O(1)$ time: the stack $M$ allows us to find in $O(1)$ the last contiguous factor of the greedy embedding of the currently enumerated subsequence which either can be continued with a different letter than in the current subsequence, or contains a position from which we can continue differently. This element of $S$ is pointed by $top(M)$. Note that we precompute this kind of information for all elements of $S$, every time we introduce an element in $S$ -- they are simply those elements whose $\beta$-component is set to $1$. As said, we can also directly jump to upmost element of $S$ which has the $\beta$ component equal to $1$, using the stack $M$. Finally, we can also do the extension in $O(1)$ time: we just have to add two contiguous factors to $S$, at most, and update $M$. As a side note, when doing an extension with the factor $w[a:b]$, we split it as $w[a:b-1]$ as two factors $w[b]$ for technical reasons, as it allows us to have an uniform procedure in identifying the longest prefix of the current greedy embedding which can be extended differently. The correctness, thus, follows: our current algorithm is simply a faster implementation of the previously presented algorithms, from Theorem \ref{lemmaRMQ}.\qed
 \end{proof}

	\section{Conclusion}
	In this work, we gained a better understanding of the set of length $k$ subsequences of a given word. Based on the notions of $m$-nearly $k$-universality and absent subsequences, 
we provided the minimal and maximal number of 
absent subsequences for a given alphabet $\Sigma$, subsequence length $k$ and universality index $\iota$. During the investigation we gave two tight bounds for $m$ as well as 
constructions on $\iota$-universal words over $\Sigma$ that omit 
this number of absent subsequences of length $k$. With these 
results we obtained a further restriction on the index of Simon's Congruence. 
While absent subsequences were shown to be instrumental in gaining a better understanding in the set of subsequences of given length of a word from a combinatorial perspective (both here and in works such as \cite{DBLP:journals/fuin/KoscheKMS22,DBLP:journals/tcs/FleischmannHHHMN23}), it also makes sense to ask whether we can effectively produce the respective set of present subsequences, by an efficient algorithm. We prove that this set can be enumerated optimally, i.e., we give an algorithm with linear preprocessing and linear output delay, or with constant delay in the case, when we only output some initial subsequence and short sketches showing how the current subsequence can be produced from the previous one.
While minimal and maximal bounds for the number of absent subsequences $m$ are shown, for given universality $\iota$ and subsequence length $k$, it remains open to determine all possible values of $m$, the structure and count of corresponding words, and whether efficient algorithms for obtaining them exist.

	

	\bibliographystyle{splncs04}
	\bibliography{refs}
	
	
	
	
	
	
\end{document}